
\documentclass[journal]{IEEEtran}
\ifCLASSINFOpdf
\else
\fi
  \usepackage[caption=false,font=footnotesize]{subfig}
\hyphenation{op-tical net-works semi-conduc-tor}

\usepackage{graphicx}
\usepackage{amsmath,amssymb,amsfonts,amsthm}
\usepackage[utf8]{inputenc}
\usepackage{multirow}
\usepackage{booktabs}
\usepackage{tabularx}
\usepackage{graphicx}
\usepackage{enumitem}
\usepackage[keeplastbox]{flushend}
\usepackage{algorithm}
\usepackage{algorithmic}

\newtheorem{theorem}{Theorem}
\newtheorem{lemma}{Lemma}
\newtheorem{corollary}{Corollary}
\newtheorem{definition}{Definition}

\usepackage[svgnames]{xcolor}
\newcommand{\new}[1]{\textcolor{black}{#1}}
\newcommand{\moved}[1]{\textcolor{black}{#1}}

\newcommand{\newnew}[1]{\textcolor{black}{#1}}

\begin{document}
%
\title{Channel estimation: unified view of optimal performance and pilot sequences}
%
%
%

\author{Luc Le Magoarou, St\'ephane Paquelet
\thanks{Luc Le Magoarou and St\'ephane Paquelet are both with bcom, Rennes, France. Contact addresses:  \texttt{luc.lemagoarou@b-com.com}, \texttt{stephane.paquelet@b-com.com}.}}

%
%

\markboth{Accepted version}%
{Accepted version}
%



\maketitle

\begin{abstract}
Channel estimation is of paramount importance in most communication systems in order to optimize the data rate/energy consumption tradeoff.
In modern systems, the possibly large number of transmit/receive antennas and subcarriers makes this task difficult.
\new{Designing pilot sequences of reasonable size yielding good performance is thus critical.
Classically, the number of pilots is reduced by viewing the channel as a random vector and assuming knowledge of its distribution. In practice, this requires estimating the channel covariance matrix, which can be computationally costly and not adapted to scenarios with high mobility.
In this paper, an alternative view is considered, in which the channel is a function of unknown deterministic parameters. In this setting, the problem of designing optimal pilot sequences of smallest possible size is studied for \emph{any} parametric channel model. To do so, the Cram\'er-Rao bound (CRB) for this general channel estimation problem is given, highlighting its key dependency on the introduced \emph{variation space}. Then, the minimal size of pilot sequences and minimal value of the CRB are determined. Moreover, a general strategy to build optimal minimal length power constrained pilots sequences is given, based on an estimation of the variation space. 
The theoretical results are finally illustrated in a massive MIMO system context. They conveniently allow to retrieve well known previous results, but also to exhibit minimal length optimal pilot sequences for a new strategy based on a nonlinear physical model.
}
\end{abstract}

\begin{IEEEkeywords}
Channel estimation, parametric model, Cram\'er-Rao bound.
\end{IEEEkeywords}

%
\IEEEpeerreviewmaketitle

\section{Introduction}
%
%
%
%

\IEEEPARstart{C}{ommunication} systems make use of a physical channel to convey information between a transmitter and a receiver \cite{Haykin2008}. Knowing the channel state at both ends of the link allows to maximize the data rate, hence the need to estimate the channel. This can be carried out by sending pilot signals known by both the transmitter and the receiver to gather noisy observations used to estimate the channel. 

Recently, the ever-growing need for data rate in modern communication networks led to use channels of very high dimension, which makes channel estimation difficult. For example, it has been recently proposed to use massive multiple input multiple output (massive MIMO) wireless systems \cite{Rusek2013,Larsson2014,Lu2014} with a large number of transmit and receive antennas in the millimeter-wave band \cite{Rappaport2013, Swindlehurst2014}, where a large bandwidth can be exploited. In that case the channel comprises hundreds or even thousands of complex numbers, whose estimation is a very challenging signal processing problem \cite{Heath2016}.

\new{
Designing pilot sequences that lead to low estimation error and are of reduced size (compared to the channel dimension) is thus a critical issue in massive MIMO systems. Classically, it has been done by considering the channel as a random vector whose distribution is known a priori, which naturally leads to the use of bayesian methods and estimators such as the linear minimum mean squared error (LMMSE). The minimal size of the pilot sequence is then determined by the effective rank of the channel covariance matrix \cite{Biguesh2006}, and efficient strategies such as the joint spatial division and multiplexing (JSDM) \cite{Adhikary2013, Adhikary2014} can be implemented. However, the main drawback of such methods is that it requires to estimate the channel covariance matrix, which can be computationally costly and unfit for high mobility scenarios (since the channel covariance then changes fast).}

\new{
Another solution to envision channel estimation, which does not require covariance estimation, is to consider the channel as a function of parameters being deterministic unknown quantities, such as the channel coefficients or the directions and complex gains of the most significant propagation paths. This naturally leads to classical estimators based on the maximum likelihood (ML) principle. Following this line of thought, modern approaches have emerged \cite{Gao2015} \cite{Xie2017} that exploit some prior knowledge regarding the parameters to estimate in order to design efficient transmission strategies. In this setting,
which quantity does determine the minimal size of pilot sequences? What is the best attainable performance? How to design optimal pilot sequences? Based on which a priori information?
}
\new{
\newline{\bf Contributions.} In this paper, we tackle these questions in a general unified way, for \emph{any} parametric channel model (linear or not). Based on the 
 Cram\'er-Rao bound (CRB) \cite{Rao1945,Cramer1946} of the considered problem, we show that the crucial object for pilot sequences determination is the \emph{variation space} of the channel, which is a notion we introduce. Identifiability conditions, the minimal size of pilot sequence, the minimal attainable variance and a strategy to build optimal pilot sequences of minimal size are given, all based on the variation space. We argue that the variation space is an object whose estimation may be simpler than that of the covariance. The theoretical part of the paper (which constitute the main contribution) is then illustrated on several MIMO channel models, and it allows to determine optimal pilot sequences and optimal performance for a new promising channel estimation strategy we propose for MIMO systems that operate in frequency division duplex (FDD) mode.} 

\noindent {\bf Related work.}
\new{On the theoretical side, this paper is a generalization and an unification of} many results obtained in the case of linear deterministic channel models \new{(in which the model parameters are simply the channel coefficients)}, both in a MIMO context \cite{Biguesh2006,Marzetta1999} and for multicarrier systems \cite{Ma2003,Barhumi2003,Minn2006}. \new{ Indeed, the present analysis based on the variation space allows to treat simple linear models and more elaborate nonlinear physical channel models the same way.} Another significant difference with prior work is that the analysis of the present paper is based on the CRB (which depends only on the model and not on the estimation method) and not directly on the error incurred by a specific estimator.

There is also a vast body of literature regarding optimal pilot sequences in a bayesian channel estimation setting, for which the channel is assumed to follow a known Gaussian \cite{Kotecha2004b,Bjornson2009,Choi2014}, \new{or more elaborate Gaussian mixture \cite{Gu2019} distribution.} 
These approaches are different in nature from the one of this paper, since  (i) they consider a specific estimator (the linear minimum mean squared error (LMMSE)), and (ii) their objective is to minimize the estimation error in average over the channel estimated distribution. On the other hand, the analysis of the  present paper is estimator independent and its objective can be seen as the minimization of the error for a given channel realization.
Recently, it has also been proposed to look for optimal pilot sequences in a multi-user bayesian setting. In \cite{Bazzi2017}, sequences are found by numerical optimization, minimizing a weighted sum of the channel estimation errors of each user. In \cite{Bazzi2018} and \cite{Bazzi2019}, heuristics are proposed which amount to send pilot sequences that span the union of the spaces generated by the leading eigenvectors of the channel correlation matrices of all users. 

\new{On the practical side, the analysis performed in this paper allows us to suggest a new transmission strategy for massive MIMO systems operating in FDD mode. It relies on the physical assumptions that the angles of arrival for channel propagation paths vary slowly and are reciprocal between the uplink and the downlink. This assumption is also at the origin of recent proposals \cite{Gao2015} \cite{Xie2017}. The strategy we propose is similar to this prior work in that is uses previous angle estimates (indifferently acquired in the uplink or downlink) to design pilot sequences. However, it is different since it allows to reestimate the angles at each step (with a small additional overhead), which leads to better performance lower bounds, as shown in section~\ref{sec:illustration}.}


\noindent {\bf Organization of the paper.} The studied problem is formulated in section~\ref{sec:formulation}. \new{The notion of variation space is introduced, and an expression of the  Cram\'er-Rao bound (CRB) based on it is given in section~\ref{sec:CRB_variation_space}.} Identifiability conditions on the observation matrices and the minimal number of observations for which they can be fulfilled are given in section~\ref{ssec:identifiability}. In section~\ref{ssec:optimality}, we express the minimal variance of any unbiased estimator by optimizing the CRB under a power constraint on the observation matrix. Associated observation matrices of minimal length are also exhibited \new{in section \ref{ssec:design}, as well as an algorithm to build it based on an estimation of the variation space.}
These results are illustrated in section~\ref{sec:illustration}, where it is shown that the proposed theoretical framework allows to retrieve well-known results previously established for linear models, \new{but also to propose a new efficient transmission strategy for massive MIMO systems operating in FDD mode. }
 \new{ For convenience and in order to keep the flow of the paper, most technical proofs are given in appendix.}

Note that this paper is partially based on some of our previous work \cite{Lemagoarou2018,Lemagoarou2019}, in which the Cram\'er-Rao bound in the specific case of a physical channel model was stated. The novelty of this paper is that the Cram\'er-Rao bound is here optimized, and the derivation is more general since it is valid for any parametric model.

\section{Problem formulation}
\label{sec:formulation}

\noindent{\bf Notations.} Matrices and vectors are denoted by bold upper-case and lower-case letters: $\mathbf{A}$ and $\mathbf{a}$ (except 3D ``\emph{spatial}'' vectors that are denoted $\overrightarrow{a}$); the $i$th column of a matrix $\mathbf{A}$ by $\mathbf{a}_i$; its entry at the $i$th line and $j$th column by $a_{ij}$. $\mathbf{A}_{[i:j,:]}$ denotes the matrix built taking the rows $i$ to $j$ of $\mathbf{A}$ (matlab style indexing). 
A matrix transpose, conjugate and transconjugate is denoted by $\mathbf{A}^T$, $\mathbf{A}^*$ and $\mathbf{A}^H$ respectively. The trace of a linear transformation represented by $\mathbf{A}$ is denoted $\text{Tr}(\mathbf{A})$.
 The linear span of a set of vectors $\mathcal{A}$ and its dimension (if it is a vector space) are denoted: $\text{span}_\mathbb{R}(\mathcal{A})$ and $\text{dim}_\mathbb{R}(\mathcal{A})$ when considering linear combinations with real coefficients, or $\text{span}_\mathbb{C}(\mathcal{A})$ and $\text{dim}_\mathbb{C}(\mathcal{A})$ when considering linear combinations with complex coefficients. The orthogonal complement of a subspace $\mathcal{W}$ is denoted $\mathcal{W}^\perp$. The Kronecker product is denoted by $\otimes$.
 The identity matrix is denoted by $\mathbf{Id}$. $\mathcal{CN}(\boldsymbol\mu,\boldsymbol{\Sigma})$ denotes the standard complex gaussian distribution with mean $\boldsymbol\mu$ and covariance $\boldsymbol{\Sigma}$. $\mathbb{E}(\cdot)$ denotes the expectation and $\text{cov}(\cdot)$ the covariance of its argument. 
\vspace{5mm}

\subsection{Observations}
\label{ssec:observations}
We consider the general channel estimation setting where a channel $\mathbf{h} \in \mathbb{C}^{N_d}$ is to be estimated, $N_d$ being the total number of complex dimensions of the channel. \new{For example, in the case of a channel between $N_t$ transmit antennas and $N_r$ receive antennas on $N_f$ subcarriers, we have $N_d = N_rN_tN_f$.} We assume it is deterministic and follows a parametric model depending on $N_p$ real parameters. It can then be seen as a function $\mathbf{h}:\mathbb{R}^{N_p} \rightarrow \mathbb{C}^{N_d}$ that maps each parameters value to the corresponding channel (we will denote the channel indifferently $\mathbf{h}$ or $\mathbf{h}(\boldsymbol{\theta})$ depending on the context). Note that this is the most generic setting since complex parameters can always be decomposed into real and imaginary parts (or modulus and angle) and thus correspond to two real parameters each. The only assumption that we make about the channel model is that the function $\mathbf{h}$ is differentiable with respect to the parameters. \moved{We denote
$
\frac{\partial \mathbf{h}}{\partial \boldsymbol{\theta}}
\triangleq
\big(\frac{\partial \mathbf{h}}{\partial \theta_1},\dots,\frac{\partial \mathbf{h}}{\partial \theta_{N_{p}}} \big) \in \mathbb{C}^{N_d\times N_{p}}
$
the complex gradient of the channel with respect to its real parameters.}

Estimation is made based on $N_m$ noisy linear observations of the form
\begin{equation}
\mathbf{y} = \mathbf{M}^H\mathbf{h} + \mathbf{n},
\label{eq:observation}
\end{equation}
where $\mathbf{n} \in \mathbb{C}^{N_m}$ corresponds to the noise whose entries are assumed i.i.d.\ complex gaussians of variance $\sigma^2$, so that $\mathbf{n}\sim \mathcal{CN}\big(\mathbf{0},\sigma^2\mathbf{Id}_{N_m} \big)$, and $\mathbf{M}\in \mathbb{C}^{N_d\times N_m}$ is \new{the matrix representing the measurement process, that we hereafter denote the \emph{observation matrix}.} It is entirely determined by the pilot sequences sent by the transmitter and the combining operations done at the receiver.
this way of expressing the observations is very general (an example in a generic MIMO wideband context is given in section~\ref{sec:illustration}).

\subsection{Estimation}
\label{ssec:esterror}
The estimator of the parameters is a function mapping obervations to estimates, denoted $\hat{\boldsymbol{\theta}} :\mathbb{C}^{N_m} \rightarrow \mathbb{R}^{N_p}$. The estimate $\hat{\boldsymbol{\theta}}(\mathbf{y})$ will be denoted $\hat{\boldsymbol{\theta}}$ (as the estimator) for shorter notations. The channel estimate is given by the model function, as $\mathbf{h}(\hat{\boldsymbol{\theta}})$.
The error is measured by the mean squared error (MSE):
\begin{align*}
\text{MSE}(\hat{\boldsymbol{\theta}}) &\triangleq \mathbb{E}\left[ \big\Vert \mathbf{h}(\boldsymbol{\theta}) - \mathbf{h}(\hat{\boldsymbol{\theta}}) \big\Vert_2^2 \right] \\
&=\big\Vert\mathbf{h}(\boldsymbol{\theta})-\mathbb{E}[\mathbf{h}(\hat{\boldsymbol{\theta}})]\big\Vert_2^2
+
\mathbb{E}\left[\big\Vert \mathbf{h}(\hat{\boldsymbol{\theta}})-\mathbb{E}[\mathbf{h}(\hat{\boldsymbol{\theta}})]\big\Vert_2^2\right],
\label{eq:MSE}
\end{align*}
where the expectation is taken over the noise distribution, and the second line corresponds to the well-known bias-variance decomposition  \cite{Kay1993}.
We assume throughout the paper that the considered channel estimators are unbiased with respect to $\mathbf{h}(\boldsymbol{\theta})$, which reads $\mathbb{E}[\mathbf{h}(\hat{\boldsymbol{\theta}})] = \mathbf{h}(\boldsymbol{\theta})$. 
It follows
\begin{equation}
\text{MSE}(\hat{\boldsymbol{\theta}}) = 
\mathbb{E}\left[\big\Vert \mathbf{h}(\hat{\boldsymbol{\theta}})-\mathbb{E}[\mathbf{h}(\hat{\boldsymbol{\theta}})]\big\Vert_2^2\right]
=
\text{Tr}[\text{cov}(\mathbf{h}(\hat{\boldsymbol{\theta}}))].
\label{eq:MSE_unbiased}
\end{equation}
This way, the bias is null and the MSE is entirely due to the variance of the estimator $\mathbf{h}(\hat{\boldsymbol{\theta}})$.

\moved{
\subsection{Cram\'er-Rao bound}
\label{ssec:CRB1}
The variance of any unbiased estimator is bounded below by the Cram\'er-Rao bound \cite{Rao1945,Cramer1946}, so that $$
\text{MSE}(\hat{\boldsymbol{\theta}}) = \text{Tr}\Big[\text{cov}\big(\mathbf{h}(\hat{\boldsymbol{\theta}})\big)\Big]
\geq
\text{CRB}(\boldsymbol{\theta},\mathbf{M}),
$$ 
where the complex CRB \cite{Vandenbos1994} takes the form
\begin{equation}
\text{CRB}(\boldsymbol{\theta},\mathbf{M})
\triangleq
\text{Tr}\bigg[\frac{\partial \mathbf{h}}{\partial \boldsymbol{\theta}} \mathbf{I}(\boldsymbol{\theta},\mathbf{M})^{-1}\frac{\partial \mathbf{h}}{\partial \boldsymbol{\theta}}^H\bigg],
\label{eq:CRB_start}
\end{equation}
$\mathbf{I}(\boldsymbol{\theta},\mathbf{M}) \in \mathbb{R}^{N_p\times N_p}$ being the Fisher information matrix (FIM) which quantifies the amount of information about the parameters $\boldsymbol{\theta}$ that the observation $\mathbf{y}$ carries when using the observation matrix $\mathbf{M}$.
The observation defined in \eqref{eq:observation} follows a gaussian distribution,
$$
\mathbf{y} \sim \mathcal{CN}\big(\mathbf{M}^H \mathbf{h},\sigma^2\mathbf{Id} \big),
$$
so that the FIM is given by the Slepian-Bangs formula \cite{Slepian1954,Bangs1971, Besson2013}: 
\begin{equation}
\mathbf{I}(\boldsymbol{\theta},\mathbf{M})= \frac{2}{\sigma^2}\mathfrak{Re}\bigg\{\frac{\partial \mathbf{h}}{\partial \boldsymbol{\theta}}^H \mathbf{M}\mathbf{M}^H \frac{\partial \mathbf{h}}{\partial \boldsymbol{\theta}}\bigg\}.
\label{eq:slepian}
\end{equation}
This finally yields
$$
\text{MSE}(\hat{\boldsymbol{\theta}}) \geq 
\frac{\sigma^2}{2}\text{Tr}\bigg[\frac{\partial \mathbf{h}}{\partial \boldsymbol{\theta}} \mathfrak{Re}\bigg\{\frac{\partial \mathbf{h}}{\partial \boldsymbol{\theta}}^H \mathbf{M}\mathbf{M}^H \frac{\partial \mathbf{h}}{\partial \boldsymbol{\theta}}\bigg\}^{-1}\frac{\partial \mathbf{h}}{\partial \boldsymbol{\theta}}^H\bigg].
$$}
\new{In this paper, we analyze the right-hand side of this inequality. It is expressed in a compact way with help of the introduced variation space in section~\ref{sec:CRB_variation_space}. Then, viewed as a function of the observation matrix $\mathbf{M}$, it is optimized under a power constraint in section~\ref{sec:optimCRB} in order to exhibit optimal pilot sequences and the associated minimal error, for \emph{any} deterministic channel model. Note that the approach we propose can be generalized to deal with improper measurements \cite{Darsena2004,Darsena2005,Abdallah2011}, using a more general form of the Slepian-Bangs formula \cite{Delmas2004}.}

\new{\section{CRB based on the variation space}
\label{sec:CRB_variation_space}}

In this section, the notion of variation space, which plays a central role in the analysis we propose, is first introduced and discussed. Then, the CRB is expressed as a function of the variation space.

\new{\subsection{Variation space and related notions}}
\label{sec:prelim}
\begin{definition}\emph{(Variation space)}
Let the set
$$
\mathcal{V}_{\boldsymbol{\theta}}\triangleq \bigg\{\frac{\partial\mathbf{h}}{\partial\boldsymbol{\theta}}\mathbf{x}, \,\mathbf{x} \in \mathbb{R}^{N_{p}}\bigg\}
$$ 
be the variation space around the parameters value $\boldsymbol{\theta}$.
This is the set corresponding to the potential directions of variation of the channel due to infinitesimal variations in the parameters value.
\label{def:variation_space}
\end{definition}
It is interesting to note that the variation space has the structure of an $\mathbb{R}$-vector space, since it contains all linear combinations of the columns of $\frac{\partial\mathbf{h}}{\partial\boldsymbol{\theta}}$ with \emph{real} coefficients. However, $\mathcal{V}_{\boldsymbol{\theta}}$ is not necessarily a $\mathbb{C}$-vector space, since it does not contain all linear combinations of the columns of $\frac{\partial\mathbf{h}}{\partial\boldsymbol{\theta}}$ with \emph{complex} coefficients (because we consider real parameters). This subtle distinction will play a major role in the subsequent analysis, as evidenced in section~\ref{sec:optimCRB}. To makes things clearer, we define also the real inner product $$\langle \mathbf{x},\mathbf{y} \rangle_{\mathbb{R}} \triangleq \mathfrak{Re}\{\mathbf{x}^H\mathbf{y}\}.$$ Two vectors $\mathbf{x}$ and $\mathbf{y}$ are said to be real-orthogonal (or $\mathbb{R}$-orthogonal) if $\langle \mathbf{x},\mathbf{y} \rangle_{\mathbb{R}}=0$. Let $\mathcal{E}$ be a $\mathbb{R}$-vector space, we denote $\text{dim}_{\mathbb{R}}(\mathcal{E})$ its dimension with the scalar field $\mathbb{R}$. Similarly, we denote the classical complex inner product $$\langle \mathbf{x},\mathbf{y} \rangle_{\mathbb{C}} \triangleq \mathbf{x}^H\mathbf{y},$$ and two vectors $\mathbf{x}$ and $\mathbf{y}$ are said to be complex-orthogonal ((or $\mathbb{C}$-orthogonal)) if $\langle \mathbf{x},\mathbf{y} \rangle_{\mathbb{C}}=0$. Let $\mathcal{F}$ be a $\mathbb{C}$-vector space, we denote $\text{dim}_{\mathbb{C}}(\mathcal{F})$ its dimension with the scalar field $\mathbb{C}$. Note that any $\mathbb{C}$-vector space is also a $\mathbb{R}$-vector space of doubled dimension, so that $\text{dim}_{\mathbb{R}}(\mathcal{F}) = 2\text{dim}_{\mathbb{C}}(\mathcal{F})$, but the converse is \emph{not} true (a $\mathbb{R}$-vector space is in general not a $\mathbb{C}$-vector space).

\subsection{Expression of the CRB}
\label{ssec:CRB}

\newnew{In the general setting we consider, the CRB can be expressed in a very simple way, depending only on the variation space, the observation matrix and the noise level. The obtained form of the CRB will prove very useful in section~\ref{sec:optimCRB} in order to optimize the observation matrix, and thus the sent pilot sequences. It is given by the following theorem.}
\begin{theorem} Provided $\text{dim}_\mathbb{R}(\mathcal{V}_{\boldsymbol{\theta}}) = N_p$, the Cram\'er-Rao bound is expressed as
$$\text{CRB}(\boldsymbol{\theta},\mathbf{M}) = \frac{\sigma^2}{2}\text{Tr}\left[\mathfrak{Re}\Big\{\mathbf{U}^H \mathbf{M}\mathbf{M}^H \mathbf{U}\Big\}^{-1} \right],$$
where $\mathbf{U}$ is any matrix whose columns form an $\mathbb{R}$-orthonormal basis of the variation space $\mathcal{V}_{\boldsymbol{\theta}}$.
\label{thm:CRB}
\end{theorem}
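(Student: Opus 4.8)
The plan is to reduce the Slepian--Bangs form of the CRB obtained at the end of Section~\ref{ssec:CRB1} — i.e.\ \eqref{eq:CRB_start} together with \eqref{eq:slepian} — to the announced expression by a single change of basis inside the variation space. Write $\mathbf{G} \triangleq \frac{\partial\mathbf{h}}{\partial\boldsymbol{\theta}} \in \mathbb{C}^{N_d\times N_p}$, whose columns span $\mathcal{V}_{\boldsymbol{\theta}}$ by Definition~\ref{def:variation_space}. The hypothesis $\text{dim}_\mathbb{R}(\mathcal{V}_{\boldsymbol{\theta}}) = N_p$ says precisely that these $N_p$ columns are $\mathbb{R}$-linearly independent, hence form an $\mathbb{R}$-basis of $\mathcal{V}_{\boldsymbol{\theta}}$; in particular $\mathbf{U}$ also has $N_p$ columns. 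The crucial point is that, since $\mathcal{V}_{\boldsymbol{\theta}}$ is an $\mathbb{R}$-vector space and both the columns of $\mathbf{G}$ and those of $\mathbf{U}$ are $\mathbb{R}$-bases of it, the change-of-basis matrix is \emph{real}: there is an invertible $\mathbf{A} \in \mathbb{R}^{N_p\times N_p}$ with $\mathbf{G} = \mathbf{U}\mathbf{A}$ (explicitly $\mathbf{A} = \mathfrak{Re}\{\mathbf{U}^H\mathbf{G}\}$ once one uses $\mathbb{R}$-orthonormality of $\mathbf{U}$). This is exactly where the subtle $\mathbb{R}$-space versus $\mathbb{C}$-space distinction stressed in Section~\ref{sec:prelim} enters — a $\mathbb{C}$-linear change of basis would not commute with $\mathfrak{Re}\{\cdot\}$ and the argument would break.

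Next I would substitute $\mathbf{G} = \mathbf{U}\mathbf{A}$ into $\mathfrak{Re}\{\mathbf{G}^H\mathbf{M}\mathbf{M}^H\mathbf{G}\}$. Because $\mathbf{A}$ is real, it can be pulled out of $\mathfrak{Re}\{\cdot\}$ and satisfies $\mathbf{A}^H=\mathbf{A}^T$, giving
$$
\mathfrak{Re}\{\mathbf{G}^H\mathbf{M}\mathbf{M}^H\mathbf{G}\} = \mathbf{A}^T\,\mathfrak{Re}\{\mathbf{U}^H\mathbf{M}\mathbf{M}^H\mathbf{U}\}\,\mathbf{A},
$$
which, since $\mathbf{A}$ is invertible, also shows that the FIM is invertible if and only if $\mathfrak{Re}\{\mathbf{U}^H\mathbf{M}\mathbf{M}^H\mathbf{U}\}$ is, so the right-hand side of the claimed identity is well defined exactly when the left-hand side is. Inverting and plugging this into \eqref{eq:CRB_start}, the factors $\mathbf{A}$ and $\mathbf{A}^{-1}$ cancel after invoking the cyclic invariance of the trace, leaving
$$
\text{CRB}(\boldsymbol{\theta},\mathbf{M}) = \frac{\sigma^2}{2}\,\text{Tr}\!\left[\mathfrak{Re}\{\mathbf{U}^H\mathbf{M}\mathbf{M}^H\mathbf{U}\}^{-1}\,\mathbf{U}^H\mathbf{U}\right].
$$

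It then remains to dispose of the residual factor $\mathbf{U}^H\mathbf{U}$, and this is where $\mathbb{R}$-orthonormality of the columns of $\mathbf{U}$ is used, in the form $\mathfrak{Re}\{\mathbf{U}^H\mathbf{U}\} = \mathbf{Id}_{N_p}$ (note that $\mathbf{U}^H\mathbf{U}$ itself need not be the identity). Setting $\mathbf{B} \triangleq \mathfrak{Re}\{\mathbf{U}^H\mathbf{M}\mathbf{M}^H\mathbf{U}\}^{-1}$, which is real and symmetric, and noting $\mathbf{U}^H\mathbf{U}$ is Hermitian, one checks that $\text{Tr}[\mathbf{B}\,\mathbf{U}^H\mathbf{U}]$ is real; hence it equals $\text{Tr}[\mathfrak{Re}\{\mathbf{B}\,\mathbf{U}^H\mathbf{U}\}] = \text{Tr}[\mathbf{B}\,\mathfrak{Re}\{\mathbf{U}^H\mathbf{U}\}] = \text{Tr}[\mathbf{B}]$, which is the asserted formula.

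I expect the only real subtlety to be this bookkeeping around the real-versus-complex structure: establishing that the change-of-basis matrix is genuinely real (so that it commutes with $\mathfrak{Re}\{\cdot\}$ and obeys $\mathbf{A}^H=\mathbf{A}^T$), and justifying that inside the trace one may replace $\mathbf{U}^H\mathbf{U}$ by its real part. Once these two points are handled, the rest is routine manipulation with traces and inverses.
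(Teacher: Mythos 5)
Your proof is correct and follows essentially the same route as the paper's: a real change of basis inside the variation space (the paper obtains it via a Gram--Schmidt factorization $\frac{\partial\mathbf{h}}{\partial\boldsymbol{\theta}} = \mathbf{U}\mathbf{R}$ with $\mathbf{R}$ real upper-triangular, then extends to an arbitrary $\mathbb{R}$-orthonormal basis through a real orthogonal transform), the identity $\mathfrak{Re}\{\mathbf{A}^T\mathbf{X}\mathbf{A}\} = \mathbf{A}^T\mathfrak{Re}\{\mathbf{X}\}\mathbf{A}$ for real $\mathbf{A}$, and cyclic invariance of the trace. Your explicit handling of the residual $\mathbf{U}^H\mathbf{U}$ factor (whose skew-symmetric imaginary part drops out of the trace against the real symmetric inverse) is a step the paper leaves implicit, but otherwise the two arguments coincide.
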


\begin{proof}

Let us start from \eqref{eq:CRB_start} and \eqref{eq:slepian}. In this basic form, the FIM is difficult to invert, because it involves the real part of a complex matrix. In \cite{Lemagoarou2018}, we proposed to use real representations of complex matrices to get rid of this problem. Here, in order to gain a deeper geometric understanding of the bound, let us use the Gram-Schmidt process on the gradient matrix $\frac{\partial \mathbf{h}}{\partial \boldsymbol{\theta}}$, with the real inner product $\langle .,. \rangle_{\mathbb{R}}$ to decompose it as
\begin{equation}
\frac{\partial \mathbf{h}}{\partial \boldsymbol{\theta}} = \mathbf{UR},
\label{eq:qr}
\end{equation}
where $\mathbf{U} \in \mathbb{C}^{N_d \times K}$ is a matrix whose columns are $\mathbb{R}$-orthonormal (meaning that $\mathfrak{Re}\left\{\mathbf{U}^H\mathbf{U}\right\} = \mathbf{Id}_{K}$), $\mathbf{R} \in \mathbb{R}^{K \times N_p}$ is a real upper-triangular matrix, and $K=\text{dim}_{\mathbb{R}}(\mathcal{V}_{\boldsymbol{\theta}})$. This decomposition of the gradient matrix allows to rewrite the FIM
$$
\mathbf{I}(\boldsymbol{\theta},\mathbf{M})= \frac{2}{\sigma^2}\mathbf{R}^T\mathfrak{Re}\Big\{\mathbf{U}^H \mathbf{M}\mathbf{M}^H \mathbf{U}\Big\}\mathbf{R},
$$
since $\mathfrak{Re}\{\mathbf{A}^H\mathbf{BA}\} = \mathbf{A}^T\mathfrak{Re}\{\mathbf{B}\}\mathbf{A}$ as soon as $\mathbf{A}$ is a real matrix.
Now, if and only if $\mathbf{R}$ is invertible, which is equivalent to $K=N_p$, the CRB is expressed
\begin{align*}
&\text{CRB}(\boldsymbol{\theta},\mathbf{M}) = \text{Tr}\bigg[\frac{\partial \mathbf{h}}{\partial \boldsymbol{\theta}} \mathbf{I}(\boldsymbol{\theta},\mathbf{M})^{-1}\frac{\partial \mathbf{h}}{\partial \boldsymbol{\theta}}^H\bigg] \\
&= \frac{\sigma^2}{2}\text{Tr}\left[ \mathbf{UR}\mathbf{R}^{-1}\mathfrak{Re}\Big\{\mathbf{U}^H \mathbf{M}\mathbf{M}^H \mathbf{U}\Big\}^{-1} \mathbf{R}^{-T}\mathbf{R}^{T}\mathbf{U}^H\right]\\
&= \frac{\sigma^2}{2}\text{Tr}\left[\mathfrak{Re}\Big\{\mathbf{U}^H \mathbf{M}\mathbf{M}^H \mathbf{U}\Big\}^{-1} \right] .
\end{align*}
In order to conclude, one can remark that 
\begin{equation*}
\resizebox{\hsize}{!}{$
\text{Tr}\left[\mathfrak{Re}\Big\{\mathbf{U}^H \mathbf{M}\mathbf{M}^H \mathbf{U}\Big\}^{-1} \right] = \text{Tr}\left[\mathfrak{Re}\Big\{\mathbf{B}^T\mathbf{U}^H \mathbf{M}\mathbf{M}^H \mathbf{U}\mathbf{B}\Big\}^{-1} \right]
$}
\end{equation*}
for any real orthogonal matrix $\mathbf{B} \in \mathbb{R}^{N_p\times N_p}$, so that the equation holds true for any matrix whose columns form an $\mathbb{R}$-orthogonal basis of $\mathcal{V}_{\boldsymbol{\theta}}$. 
\end{proof}

\newnew{This theorem allows to express the CRB in a way that is particularly suited to the determination of optimal observation matrices that is carried out in section~\ref{sec:optimCRB}.} Moreover, the CRB can be given an even simpler form. Indeed, it shows an invariance property, it is true for any matrix $\mathbf{U}$ whose columns are an $\mathbb{R}$-orthonormal basis of $\mathcal{V}_{\boldsymbol{\theta}}$. Moreover, the matrix $\mathfrak{Re}\big\{\mathbf{U}^H \mathbf{M}\mathbf{M}^H \mathbf{U}\big\}$ can be given a nice interpretation. Indeed, the orthogonal projection $\mathbf{P}_{\mathcal{V}_{\boldsymbol{\theta}}} \mathbf{z}$ of any vector $\mathbf{z}$ onto $\mathcal{V}_{\boldsymbol{\theta}}$ is expressed
$$
\mathbf{P}_{\mathcal{V}_{\boldsymbol{\theta}}} \mathbf{z} = \sum\nolimits_{i=1}^{N_p}\langle \mathbf{u}_i,\mathbf{z} \rangle_{\mathbb{R}} \mathbf{u}_i = \mathbf{U}\mathfrak{Re}\{\mathbf{U}^H\mathbf{z}\},
$$
so that $\mathfrak{Re}\{\mathbf{U}^H\mathbf{z}\}$ corresponds to the coordinates of the projection in the basis given by $\mathbf{U}$. Now, if $\mathbf{z} = \mathbf{M}\mathbf{M}^H\mathbf{t}$ with $\mathbf{t} \in \mathcal{V}_{\boldsymbol{\theta}}$ then 
$\mathfrak{Re}\big\{\mathbf{U}^H \mathbf{M}\mathbf{M}^H \mathbf{t}\big\} = \mathfrak{Re}\big\{\mathbf{U}^H \mathbf{M}\mathbf{M}^H \mathbf{U}\big\}\mathbf{r}$ for some $\mathbf{r} \in \mathbb{R}^{N_p}$ corresponding to the coordinates of $\mathbf{t}$ in the basis given by $\mathbf{U}$. It means that $\mathfrak{Re}\big\{\mathbf{U}^H \mathbf{M}\mathbf{M}^H \mathbf{U}\big\}$ is the matrix that corresponds to the operator $\mathbf{P}_{\mathcal{V}_{\boldsymbol{\theta}}}\mathbf{M}\mathbf{M}^H$ restricted to $\mathcal{V}_{\boldsymbol{\theta}}$ when expressed in the basis given by $\mathbf{U}$. Such an operator corresponds to the notion of compression in functional analysis. 

\begin{definition} \emph{(Compression \cite[p.120]{Halmos1982})}
Let $\mathcal{H}$ be a subspace of a Hilbert space $\mathcal{K}$, let $\mathbf{P}_{\mathcal{H}}$ be the orthogonal projection from $\mathcal{K}$ onto $\mathcal{H}$, and let $\mathbf{B}:\mathcal{K}\rightarrow \mathcal{K}$ be a linear operator on $\mathcal{K}$. The linear operator $\mathbf{A}:\mathcal{H}\rightarrow \mathcal{H}$ is the compression of $\mathbf{B}$ to $\mathcal{H}$, denoted $\left[ \mathbf{B}\right]_{\mathcal{H}}$, if
$$
\mathbf{Ax} = \mathbf{P}_{\mathcal{H}} \mathbf{Bx},\quad \forall \mathbf{x}\in \mathcal{H}.
$$
\label{def:compression}
\end{definition}
In the following, and when no confusion is possible, we denote the same way a matrix $\mathbf{A}$ and the operator associated to the multiplication by $\mathbf{A}$. Moreover, for an operator $\mathbf{A}:\mathcal{H}\rightarrow \mathcal{H}$ where $\mathcal{H}$ is a $\mathbb{K}$-vector space ($\mathbb{K} \in \{\mathbb{R},\mathbb{C}\}$), we define its trace as
$$
\text{Tr}\left[\mathbf{A}\right] \triangleq \sum\nolimits_{i=1}^{N_p}\langle \mathbf{v}_i,\mathbf{A}\mathbf{v}_i \rangle_{\mathbb{K}},
$$
where $\{\mathbf{v}_1,\dots,\mathbf{v}_{N_p}\}$ is any $\mathbb{K}$-orthonormal basis of $\mathcal{H}$. It coincides with the sum of the diagonal elements of a matrix when the operator action is a matrix multiplication. These two notions allow to express the CRB in a simpler and more intrinsic form, as in the following corollary (which is nothing more than a coordinate-free version of theorem~\ref{thm:CRB}).
\begin{corollary} Provided $\text{dim}_\mathbb{R}(\mathcal{V}_{\boldsymbol{\theta}}) = N_p$, the Cram\'er-Rao bound admits an intrinsic expression as
$$\text{CRB}(\boldsymbol{\theta},\mathbf{M}) 
 =  \frac{\sigma^2}{2} \text{Tr}\left[\left(\big[\mathbf{MM}^H\big]_{\mathcal{V}_{\boldsymbol{\theta}}}\right)^{-1}\right],$$
where $\big[\mathbf{MM}^H\big]_{\mathcal{V}_{\boldsymbol{\theta}}}$ is the compression of $\mathbf{MM}^H$ to the variation space $\mathcal{V}_{\boldsymbol{\theta}}$.
\label{cor:intrinsic_CRB}
\end{corollary}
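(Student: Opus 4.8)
The plan is to show that Corollary~\ref{cor:intrinsic_CRB} is merely a restatement of Theorem~\ref{thm:CRB} in coordinate-free language, so the proof is essentially a matter of identifying the matrix $\mathfrak{Re}\{\mathbf{U}^H\mathbf{M}\mathbf{M}^H\mathbf{U}\}$ appearing in the theorem with the matrix representation of the compression operator $[\mathbf{M}\mathbf{M}^H]_{\mathcal{V}_{\boldsymbol{\theta}}}$ in the $\mathbb{R}$-orthonormal basis given by the columns $\mathbf{u}_1,\dots,\mathbf{u}_{N_p}$ of $\mathbf{U}$. First I would recall from the discussion preceding Definition~\ref{def:compression} that, for $\mathbf{t}\in\mathcal{V}_{\boldsymbol{\theta}}$ with real coordinate vector $\mathbf{r}$ (so $\mathbf{t}=\mathbf{U}\mathbf{r}$), the orthogonal projection onto $\mathcal{V}_{\boldsymbol{\theta}}$ of $\mathbf{M}\mathbf{M}^H\mathbf{t}$ has coordinates $\mathfrak{Re}\{\mathbf{U}^H\mathbf{M}\mathbf{M}^H\mathbf{U}\}\mathbf{r}$; this is exactly the statement that $\mathfrak{Re}\{\mathbf{U}^H\mathbf{M}\mathbf{M}^H\mathbf{U}\}$ is the matrix of $\mathbf{x}\mapsto\mathbf{P}_{\mathcal{V}_{\boldsymbol{\theta}}}\mathbf{M}\mathbf{M}^H\mathbf{x}$ restricted to $\mathcal{V}_{\boldsymbol{\theta}}$, i.e.\ of $[\mathbf{M}\mathbf{M}^H]_{\mathcal{V}_{\boldsymbol{\theta}}}$ per Definition~\ref{def:compression}, with $\mathcal{H}=\mathcal{V}_{\boldsymbol{\theta}}$, $\mathcal{K}=\mathbb{C}^{N_d}$ viewed as an $\mathbb{R}$-Hilbert space under $\langle\cdot,\cdot\rangle_{\mathbb{R}}$, and $\mathbf{B}=\mathbf{M}\mathbf{M}^H$.

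Next I would observe that the operations ``take matrix inverse'' and ``take trace'' are intrinsic: since $\mathbf{U}$ has $\mathbb{R}$-orthonormal columns forming a basis of $\mathcal{V}_{\boldsymbol{\theta}}$, the matrix of the inverse operator $\big([\mathbf{M}\mathbf{M}^H]_{\mathcal{V}_{\boldsymbol{\theta}}}\big)^{-1}$ in this basis is $\mathfrak{Re}\{\mathbf{U}^H\mathbf{M}\mathbf{M}^H\mathbf{U}\}^{-1}$, and the operator trace defined above (via any $\mathbb{R}$-orthonormal basis, here the columns of $\mathbf{U}$) of this inverse operator equals $\text{Tr}\big[\mathfrak{Re}\{\mathbf{U}^H\mathbf{M}\mathbf{M}^H\mathbf{U}\}^{-1}\big]$, the ordinary matrix trace. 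One should also note that the hypothesis $\text{dim}_{\mathbb{R}}(\mathcal{V}_{\boldsymbol{\theta}})=N_p$ guarantees both that $\mathbf{U}$ has exactly $N_p$ columns (so the basis has the right cardinality for the operator-trace definition) and that $\mathfrak{Re}\{\mathbf{U}^H\mathbf{M}\mathbf{M}^H\mathbf{U}\}$ is invertible under the same conditions that made $\mathbf{R}$ invertible in the proof of Theorem~\ref{thm:CRB}; in particular the compression is a well-defined invertible operator on $\mathcal{V}_{\boldsymbol{\theta}}$. Combining these identifications with the expression of Theorem~\ref{thm:CRB} gives
$$
\text{CRB}(\boldsymbol{\theta},\mathbf{M})=\frac{\sigma^2}{2}\,\text{Tr}\Big[\mathfrak{Re}\big\{\mathbf{U}^H\mathbf{M}\mathbf{M}^H\mathbf{U}\big\}^{-1}\Big]=\frac{\sigma^2}{2}\,\text{Tr}\Big[\big([\mathbf{M}\mathbf{M}^H]_{\mathcal{V}_{\boldsymbol{\theta}}}\big)^{-1}\Big],
$$
which is the claim.

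I do not anticipate a genuine obstacle here, since the corollary is a translation rather than a new result; the one point requiring a little care is \emph{well-definedness}, namely checking that the right-hand side does not depend on the choice of $\mathbf{U}$. This is handled exactly as the invariance remark at the end of the proof of Theorem~\ref{thm:CRB}: any two $\mathbb{R}$-orthonormal bases of $\mathcal{V}_{\boldsymbol{\theta}}$ differ by a real orthogonal change of basis $\mathbf{B}\in\mathbb{R}^{N_p\times N_p}$, under which $\mathfrak{Re}\{\mathbf{U}^H\mathbf{M}\mathbf{M}^H\mathbf{U}\}$ transforms to $\mathbf{B}^T\mathfrak{Re}\{\mathbf{U}^H\mathbf{M}\mathbf{M}^H\mathbf{U}\}\mathbf{B}$, and the trace of the inverse is invariant under such conjugation; equivalently, the operator trace defined above is independent of the chosen $\mathbb{R}$-orthonormal basis by the usual cyclicity argument. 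Hence both the compression operator and its inverse-trace are intrinsic objects attached to $\mathbf{M}\mathbf{M}^H$ and $\mathcal{V}_{\boldsymbol{\theta}}$ alone, and the corollary follows.
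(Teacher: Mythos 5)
Your proof is correct and follows essentially the same route as the paper, which treats the corollary as a coordinate-free restatement of Theorem~\ref{thm:CRB} justified by the preceding discussion identifying $\mathfrak{Re}\{\mathbf{U}^H\mathbf{M}\mathbf{M}^H\mathbf{U}\}$ with the matrix of the compression $[\mathbf{M}\mathbf{M}^H]_{\mathcal{V}_{\boldsymbol{\theta}}}$ in the $\mathbb{R}$-orthonormal basis given by $\mathbf{U}$. Your added care about well-definedness and invertibility matches the invariance remark at the end of the paper's proof of the theorem.
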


 This form of the CRB shows that the minimal variance of any unbiased estimator is determined by the interaction between the observation matrix $\mathbf{M}$ and the potential directions of variations of the channel due to infinitesimal variations of the parameters around their value, represented by the set $\mathcal{V}_{\boldsymbol{\theta}}$. This fact, which is key in our analysis, is further exploited in the following section.

\new{\section{Optimized observation matrices}
\label{sec:optimCRB}}
\new{
In this section, the objective is to optimize the observation matrix $\mathbf{M}$ with respect to the particular form of the CRB given in theorem~\ref{thm:CRB}. We first give identifiability conditions, which allow to determine a minimal number of observations. Then, the optimal CRB and associated observation matrices of minimal size are given. Finally, we give a practical algorithm to design observation matrices based on an estimation of the variation space.
}

\subsection{Identifiability}
\label{ssec:identifiability}
Parameters are said to be identifiable if and only if the CRB is finite,
$$
\text{Identifiability} \Leftrightarrow\text{CRB}(\boldsymbol{\theta},\mathbf{M}) < +\infty.
$$
Identifiability imposes conditions on the variation space $\mathcal{V}_{\boldsymbol{\theta}}$ and on the observation matrix $\mathbf{M}$, as  
stated in the following theorem.

\begin{theorem} The parameters are identifiable if and only if 
$$
\text{dim}_\mathbb{R}(\mathcal{V}_{\boldsymbol{\theta}}) = N_p
$$
and
$$
\mathcal{V}_{\boldsymbol{\theta}} \cap \text{im}_\mathbb{C}(\mathbf{M})^\perp = \{\mathbf{0}\}.
$$
\label{thm:identif}
\end{theorem}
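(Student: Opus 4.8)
The plan is to reduce the statement to the real Gram--Schmidt decomposition $\frac{\partial\mathbf{h}}{\partial\boldsymbol{\theta}} = \mathbf{UR}$ already introduced in the proof of Theorem~\ref{thm:CRB}, where $\mathbf{U}\in\mathbb{C}^{N_d\times K}$ has $\mathbb{R}$-orthonormal columns spanning $\mathcal{V}_{\boldsymbol{\theta}}$, $\mathbf{R}\in\mathbb{R}^{K\times N_p}$ is real upper triangular of rank $K$, and $K = \dim_{\mathbb{R}}(\mathcal{V}_{\boldsymbol{\theta}}) \le N_p$. There the FIM was shown to factor as $\mathbf{I}(\boldsymbol{\theta},\mathbf{M}) = \tfrac{2}{\sigma^2}\mathbf{R}^T\mathbf{G}\mathbf{R}$ with $\mathbf{G}\triangleq\mathfrak{Re}\{\mathbf{U}^H\mathbf{M}\mathbf{M}^H\mathbf{U}\}\in\mathbb{R}^{K\times K}$. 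Since identifiability (a finite CRB) amounts to invertibility of $\mathbf{I}(\boldsymbol{\theta},\mathbf{M})$ --- a singular FIM being understood to give $\text{CRB}(\boldsymbol{\theta},\mathbf{M}) = +\infty$ --- I would characterise it through $\mathbf{R}$ and $\mathbf{G}$ separately.

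First I would dispatch the condition on $\mathcal{V}_{\boldsymbol{\theta}}$. Because $\mathbf{R}$ has only $K$ rows, $\mathrm{rank}(\mathbf{I}(\boldsymbol{\theta},\mathbf{M})) \le \mathrm{rank}(\mathbf{R}) \le K \le N_p$, so the $N_p\times N_p$ FIM can be invertible only when $K = N_p$, which yields the necessity of $\dim_{\mathbb{R}}(\mathcal{V}_{\boldsymbol{\theta}}) = N_p$. Conversely, when $K = N_p$ the matrix $\mathbf{R}$ is square with $\mathrm{rank}(\mathbf{R}) = N_p$, hence invertible, so $\mathbf{I}(\boldsymbol{\theta},\mathbf{M})$ is invertible if and only if $\mathbf{G}$ is; equivalently, by Theorem~\ref{thm:CRB}, $\text{CRB}(\boldsymbol{\theta},\mathbf{M}) = \tfrac{\sigma^2}{2}\text{Tr}[\mathbf{G}^{-1}]$ is finite iff $\mathbf{G}$ is invertible.

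It then remains to show that, under $K = N_p$, invertibility of $\mathbf{G}$ is equivalent to $\mathcal{V}_{\boldsymbol{\theta}}\cap\text{im}_{\mathbb{C}}(\mathbf{M})^\perp = \{\mathbf{0}\}$. The computation I would use is that for real $\mathbf{x}\in\mathbb{R}^{N_p}$ one has $\mathbf{x}^T\mathbf{G}\mathbf{x} = \mathfrak{Re}\{\mathbf{x}^H\mathbf{U}^H\mathbf{M}\mathbf{M}^H\mathbf{U}\mathbf{x}\} = \Vert\mathbf{M}^H\mathbf{U}\mathbf{x}\Vert_2^2 \ge 0$, so $\mathbf{G}$ is positive semidefinite and is invertible precisely when it is positive definite, i.e.\ when $\mathbf{M}^H\mathbf{U}\mathbf{x}\neq\mathbf{0}$ for every nonzero real $\mathbf{x}$. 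Since the columns of $\mathbf{U}$ are $\mathbb{R}$-linearly independent, $\mathbf{x}\mapsto\mathbf{U}\mathbf{x}$ is an $\mathbb{R}$-linear isomorphism from $\mathbb{R}^{N_p}$ onto $\mathcal{V}_{\boldsymbol{\theta}}$, so this is the same as saying that no nonzero $\mathbf{v}\in\mathcal{V}_{\boldsymbol{\theta}}$ satisfies $\mathbf{M}^H\mathbf{v} = \mathbf{0}$; and $\mathbf{M}^H\mathbf{v} = \mathbf{0}$ means exactly that $\mathbf{v}$ is $\mathbb{C}$-orthogonal to every column of $\mathbf{M}$, i.e.\ $\mathbf{v}\in\text{im}_{\mathbb{C}}(\mathbf{M})^\perp$. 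Collecting the two conditions proves both implications.

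There is no deep obstacle here; the only points requiring care are that the rank inequality only yields $K = N_p$ and says nothing about $\mathbf{G}$ by itself --- so the two stated conditions genuinely come from the two distinct factors $\mathbf{R}$ and $\mathbf{G}$ of the FIM --- and that one should fix the convention that a singular Fisher information matrix corresponds to $\text{CRB} = +\infty$, so that the chain ``CRB finite $\Leftrightarrow$ $\mathbf{I}(\boldsymbol{\theta},\mathbf{M})$ invertible $\Leftrightarrow$ ($\mathbf{R}$ invertible and $\mathbf{G}$ invertible)'' is legitimate.
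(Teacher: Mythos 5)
Your proposal is correct and follows essentially the same route as the paper: factor the FIM through the $\mathbb{R}$-Gram--Schmidt decomposition $\frac{\partial\mathbf{h}}{\partial\boldsymbol{\theta}}=\mathbf{UR}$, obtain the dimension condition from invertibility of $\mathbf{R}$, and reduce invertibility of $\mathfrak{Re}\{\mathbf{U}^H\mathbf{MM}^H\mathbf{U}\}$ to $\Vert\mathbf{M}^H\mathbf{z}\Vert_2\neq 0$ for all nonzero $\mathbf{z}\in\mathcal{V}_{\boldsymbol{\theta}}$, i.e.\ $\mathcal{V}_{\boldsymbol{\theta}}\cap\text{im}_\mathbb{C}(\mathbf{M})^\perp=\{\mathbf{0}\}$. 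Your explicit rank argument for the necessity of $K=N_p$ and your remark that positive semidefiniteness of $\mathbf{G}$ is what justifies the quadratic-form criterion are slightly more detailed than the paper's wording, but the substance is identical.
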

\begin{proof}
The first condition $\text{dim}_\mathbb{R}(\mathcal{V}_{\boldsymbol{\theta}}) = N_p$ is equivalent to the invertibility of $\mathbf{R}$ that was shown to be a necessary condition for the CRB to be finite in section~\ref{ssec:CRB}.
When this condition is fulfilled, identifiability holds if and only if the matrix $\mathfrak{Re}\left\{\mathbf{U}^H\mathbf{MM}^H\mathbf{U} \right\}$ is invertible. This matrix being symmetric, it is invertible if and only if
$$
\forall \mathbf{x}\neq \mathbf{0} \in \mathbb{R}^{N_p},\, \mathbf{x}^T\mathfrak{Re}\left\{\mathbf{U}^H\mathbf{MM}^H\mathbf{U}\right\}\mathbf{x} \neq 0 .
$$ 
Moreover, for any real vector $\mathbf{x}$, $\mathbf{x}^T\mathfrak{Re}\left\{\mathbf{U}^H\mathbf{MM}^H\mathbf{U}\right\}\mathbf{x} = \mathbf{x}^T\mathbf{U}^H\mathbf{MM}^H\mathbf{U}\mathbf{x}$. Thus, recalling that $\mathcal{V}_{\boldsymbol{\theta}} = \text{im}_{\mathbb{R}}(\mathbf{U})$, identifiability holds if and only if
$$
\forall \mathbf{z}\neq \mathbf{0} \in \mathcal{V}_{\boldsymbol{\theta}},\, \mathbf{z}^H\mathbf{MM}^H\mathbf{z} = \left\Vert \mathbf{M}^H\mathbf{z} \right\Vert_2^2 \neq 0,
$$
which is equivalent (since $\text{ker}(\mathbf{M}^H) = \text{im}_{\mathbb{C}}(\mathbf{M})^\perp$) to 
$$
\mathcal{V}_{\boldsymbol{\theta}} \cap \text{im}_\mathbb{C}(\mathbf{M})^\perp = \{\mathbf{0}\}.
$$
\end{proof}


\noindent {\bf Interpretations.} The first identifiability condition $\text{dim}_\mathbb{R}(\mathcal{V}_{\boldsymbol{\theta}}) = N_p$ means that the columns of $\frac{\partial \mathbf{h}}{\partial \boldsymbol{\theta}}$ have to be linearly independent over $\mathbb{R}$ for identifiability to be possible, whatever the observation matrix. Said differently, the number  of degrees of freedom of the variation space has to be equal to the number of parameters to estimate, so that small variations of the channel $\mathbf{h}$ due to an infinitesimal variation in  the value of any parameter cannot be mistaken with small variations of the channel due to infinitesimal variations in the values of the other parameters. Note that since $\text{dim}_\mathbb{R}(\mathcal{V}_{\boldsymbol{\theta}}) \leq 2N_d$, this condition implies $N_p \leq 2N_d$, which means that it is impossible to identify a number of parameters that is more than twice the dimension of the channel.

Then, if the first condition is fulfilled, the second condition $\mathcal{V}_{\boldsymbol{\theta}} \cap \text{im}_\mathbb{C}(\mathbf{M})^\perp = \{\mathbf{0}\}$ means that no nonzero vector in the space of variations $\mathcal{V}_{\boldsymbol{\theta}}$ can be orthogonal to the column space of the observation matrix $\mathbf{M}$ for identifiability to hold. Said differently, the observation matrix has to preserve some energy for any element of the space of variations, every infinitesimal variation in the values of the parameters has to cause a change in the observation vector $\mathbf{y}$.

\noindent {\bf Number of observations.} Identifiability directly imposes a minimal number of observations $N_m$, as stated in the following corollary.
\begin{corollary}
Parameters can be identifiable only if 
$$N_m\geq  \frac{N_p}{2}.$$
\label{cor:minobs}
\end{corollary}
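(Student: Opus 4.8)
The plan is to chain the two identifiability conditions supplied by Theorem~\ref{thm:identif} and then invoke a dimension count for real-linear maps. Assume the parameters are identifiable. Then on one hand $\text{dim}_\mathbb{R}(\mathcal{V}_{\boldsymbol{\theta}}) = N_p$, and on the other hand $\mathcal{V}_{\boldsymbol{\theta}} \cap \text{im}_\mathbb{C}(\mathbf{M})^\perp = \{\mathbf{0}\}$. Since $\text{im}_\mathbb{C}(\mathbf{M})^\perp = \text{ker}(\mathbf{M}^H)$, this second condition says precisely that the map $\mathbf{z} \mapsto \mathbf{M}^H\mathbf{z}$, viewed as an $\mathbb{R}$-linear map from the $\mathbb{R}$-vector space $\mathcal{V}_{\boldsymbol{\theta}}$ into $\mathbb{C}^{N_m}$ (the latter regarded as an $\mathbb{R}$-vector space), has trivial kernel, hence is injective.

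An injective linear map cannot map into a space of strictly smaller dimension, so $\text{dim}_\mathbb{R}(\mathcal{V}_{\boldsymbol{\theta}}) \leq \text{dim}_\mathbb{R}(\mathbb{C}^{N_m})$. Using $\text{dim}_\mathbb{R}(\mathbb{C}^{N_m}) = 2N_m$ together with the first condition $\text{dim}_\mathbb{R}(\mathcal{V}_{\boldsymbol{\theta}}) = N_p$ yields $N_p \leq 2N_m$, i.e. $N_m \geq N_p/2$, which is the claim. One could equivalently phrase the dimension count through the rank--nullity identity $\text{dim}_\mathbb{R}(\mathcal{V}_{\boldsymbol{\theta}}) = \text{dim}_\mathbb{R}(\text{ker}) + \text{dim}_\mathbb{R}(\text{im})$ applied to the restriction of $\mathbf{M}^H$, with the kernel term vanishing by the second identifiability condition.

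I do not anticipate a real obstacle here; the only point requiring a little care is the bookkeeping between the real and complex structures, namely that $\mathbb{C}^{N_m}$ has real dimension $2N_m$ and that the image of an $\mathbb{R}$-subspace of $\mathbb{C}^{N_d}$ under the $\mathbb{C}$-linear (hence $\mathbb{R}$-linear) map $\mathbf{M}^H$ is an $\mathbb{R}$-subspace of $\mathbb{C}^{N_m}$, on which the $\mathbb{R}$-dimension bound applies. This mirrors the remark already made after Theorem~\ref{thm:identif} that $\text{dim}_\mathbb{R}(\mathcal{V}_{\boldsymbol{\theta}}) \leq 2N_d$ forces $N_p \leq 2N_d$; the present corollary is the analogous statement obtained by replacing the ambient channel space $\mathbb{C}^{N_d}$ with the observation space $\mathbb{C}^{N_m}$, made possible by the injectivity of $\mathbf{M}^H$ on $\mathcal{V}_{\boldsymbol{\theta}}$.
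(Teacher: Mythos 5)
Your proof is correct and follows essentially the same dimension-counting route as the paper: both arguments reduce the second identifiability condition to injectivity of $\mathbf{M}^H$ on the $N_p$-dimensional real space $\mathcal{V}_{\boldsymbol{\theta}}$ and then count real dimensions. The only cosmetic difference is that the paper does the count on the domain side, in the ambient space $\mathbb{C}^{N_d}$ via rank--nullity followed by the bound $\dim_\mathbb{C}(\text{im}_\mathbb{C}(\mathbf{M}^H)) \leq N_m$, whereas you bound directly by the real dimension $2N_m$ of the codomain $\mathbb{C}^{N_m}$; the two bookkeeping schemes are equivalent.
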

\begin{proof} Identifiability can be stated:
$$
\forall \mathbf{z}\neq\mathbf{0}\in \mathcal{V}_{\boldsymbol{\theta}},\, \mathbf{M}^H\mathbf{z} \neq \mathbf{0},
$$ 
which is possible only if the $\mathbb{R}$-dimension of $\text{ker}(\mathbf{M}^H)$ plus the $\mathbb{R}$-dimension of $\mathcal{V}_{\boldsymbol{\theta}}$ is no greater than the $\mathbb{R}$-dimension of the ambient space $\mathbb{C}^{N_d}$ (so that they can have a trivial intersection). This writes
$$
\text{dim}_\mathbb{R}(\text{ker}(\mathbf{M}^H)) + N_p \leq 2N_d.
$$ 
Moreover, $\text{dim}_\mathbb{R}(\text{ker}(\mathbf{M}^H)) = 2N_d - \text{dim}_\mathbb{R}(\text{im}_\mathbb{C}(\mathbf{M}^H))$ (rank-nullity theorem), so that we end up with
$$
\text{dim}_\mathbb{R}(\text{im}_\mathbb{C}(\mathbf{M}^H)) \geq N_p.
$$
The $\mathbb{R}$-dimension of a $\mathbb{C}$-vector space being twice its $\mathbb{C}$-dimension and the $\mathbb{C}$-dimension being upper-bounded by the number of columns, we finally get
$$
N_m \geq \text{dim}_\mathbb{C}(\text{im}_\mathbb{C}(\mathbf{M}^H)) \geq \frac{N_p}{2},
$$
which proves the result.
\end{proof}
We just showed that the minimal number of observations $N_m$ required for identifiability to be possible is $\lceil \frac{N_p}{2} \rceil$. In other words, the matrix $\mathbf{M}$ has to have at least $\lceil \frac{N_p}{2} \rceil$ columns for the CRB to be finite. As will be shown in the next subsection, there always exist an optimal observation matrix having $\lceil \frac{N_p}{2} \rceil$ columns.

\subsection{Optimality}
\label{ssec:optimality}
Let us now determine the minimal value of the CRB under a power constraint, and the observation matrices allowing to attain it. This corresponds to solve the optimization problem: 
\begin{align}
\begin{split}
\underset{\mathbf{M}}{\text{minimize}} &\quad \text{CRB}(\boldsymbol{\theta},\mathbf{M}), \\
\text{subject to} &\quad \left\Vert \mathbf{M} \right\Vert_F^2=P.
\end{split}
\label{eq:pb_optim}
\end{align}
Note that the quantity $\left\Vert \mathbf{M} \right\Vert_F^2 = P = \text{Tr}(\mathbf{MM}^H)$ corresponds to the observation power, which is proportional to the received power and not directly equal to the transmitted power. The two quantities are linked in section \ref{sec:illustration}.

\moved{
\subsubsection{Decomposition of the variation space}
\label{ssec:var_space}
\new{The expression of the CRB given in theorem~\ref{thm:CRB} is valid for any $\mathbb{R}$-orthogonal basis of $\mathcal{V}_{\boldsymbol{\theta}}$. In order to ease optimization, we exhibit here a specific basis with useful properties.}
}
\moved{
To do so, let us state the following lemma that allows to decompose $\mathcal{V}_{\boldsymbol{\theta}}$ into a direct sum of $\mathbb{C}$-orthogonal subspaces.
\begin{lemma}
(i) Any $\mathbb{R}$-vector space $\mathcal{E}$ of dimension $d$ that belongs to a $\mathbb{C}$-vector space $\mathcal{F}$ (containing $\mathrm{j}\mathcal{E}$) can be decomposed into the direct sum of subspaces of dimension $2$ (and possibly a subspace of dimension one if $d$ is odd) that are mutually $\mathbb{C}$-orthogonal. (ii) The subspaces of the aforementioned decomposition belong to eigenspaces of $\mathbf{P}_{\mathcal{E}}\circ\mathbf{P}_{\mathrm{j}\mathcal{E}}$, where $\mathbf{P}_{\mathcal{E}}$ (resp.\ $\mathbf{P}_{\mathrm{j}\mathcal{E}}$) is the orthogonal projection onto $\mathcal{E}$ (resp.\ $\mathrm{j}\mathcal{E}$).
\label{lem:decomp_V_theta}
\end{lemma}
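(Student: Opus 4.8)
The plan is to reduce the statement to the classical real canonical form of a skew-adjoint operator. Regard all spaces as subspaces of the Hilbert space $\mathcal{F}$ equipped with the real inner product $\langle\cdot,\cdot\rangle_{\mathbb{R}}$, with $\mathbf{P}_{\mathcal{E}}$ and $\mathbf{P}_{\mathrm{j}\mathcal{E}}$ the corresponding orthogonal projections. Let $J$ be multiplication by $\mathrm{j}$, an $\mathbb{R}$-linear isometry of $\mathcal{F}$ with $J^{-1}=-J$, so that $\mathbf{P}_{\mathrm{j}\mathcal{E}}=J\,\mathbf{P}_{\mathcal{E}}\,J^{-1}$. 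First I would introduce the operator $S:\mathcal{E}\rightarrow\mathcal{E}$, $S\mathbf{x}\triangleq\mathbf{P}_{\mathcal{E}}(\mathrm{j}\mathbf{x})$, and record three elementary facts: (a) for $\mathbf{x},\mathbf{y}\in\mathcal{E}$, $\mathfrak{Im}(\mathbf{x}^H\mathbf{y})=\langle \mathrm{j}\mathbf{x},\mathbf{y}\rangle_{\mathbb{R}}=\langle S\mathbf{x},\mathbf{y}\rangle_{\mathbb{R}}$, hence
$$\mathbf{x}^H\mathbf{y}=\langle\mathbf{x},\mathbf{y}\rangle_{\mathbb{R}}+\mathrm{j}\,\langle S\mathbf{x},\mathbf{y}\rangle_{\mathbb{R}};$$
(b) $S$ is skew-adjoint for $\langle\cdot,\cdot\rangle_{\mathbb{R}}$, since $\langle S\mathbf{x},\mathbf{y}\rangle_{\mathbb{R}}=\mathfrak{Im}(\mathbf{x}^H\mathbf{y})=-\mathfrak{Im}(\mathbf{y}^H\mathbf{x})=-\langle S\mathbf{y},\mathbf{x}\rangle_{\mathbb{R}}$; (c) the operator $\mathbf{x}\mapsto\mathbf{P}_{\mathcal{E}}\mathbf{P}_{\mathrm{j}\mathcal{E}}\mathbf{x}$ on $\mathcal{E}$ equals $-S^{2}$, which is therefore self-adjoint and positive semidefinite with spectrum in $[0,1]$; this is obtained by unfolding $\mathbf{P}_{\mathrm{j}\mathcal{E}}=J\mathbf{P}_{\mathcal{E}}J^{-1}$ and using $\mathbf{P}_{\mathcal{E}}\mathbf{x}=\mathbf{x}$ on $\mathcal{E}$. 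The displayed identity is the engine of the argument: two $\mathbb{R}$-orthogonal vectors of $\mathcal{E}$ are $\mathbb{C}$-orthogonal exactly when, in addition, $\langle S\mathbf{x},\mathbf{y}\rangle_{\mathbb{R}}=0$.

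Next I would apply the standard block-diagonalization of a real skew-symmetric operator to $S$: there is an $\mathbb{R}$-orthogonal decomposition $\mathcal{E}=\ker S\,\oplus\,\mathcal{E}_{1}\oplus\cdots\oplus\mathcal{E}_{m}$, where each $\mathcal{E}_{k}$ is two-dimensional and $S$-invariant, $S$ acting on $\mathcal{E}_{k}$ as $\mu_{k}$ times a $90^{\circ}$ rotation in a suitable $\mathbb{R}$-orthonormal basis, with $0<\mu_{k}$ (and $\mu_k^2\le 1$ by (c)). Since $S$ vanishes on $\ker S$, I can split $\ker S$ arbitrarily into an $\mathbb{R}$-orthogonal sum of two-dimensional subspaces together with a single line when $\dim_{\mathbb{R}}(\ker S)$ is odd, every such piece being trivially $S$-invariant. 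Collecting these pieces and the $\mathcal{E}_{k}$ yields a decomposition of $\mathcal{E}$ into mutually $\mathbb{R}$-orthogonal, $S$-invariant subspaces, all of dimension $2$ except possibly one of dimension $1$; a one-dimensional summand occurs precisely when $d$ is odd, because the $\mathcal{E}_{k}$ contribute even dimension, so $\dim_{\mathbb{R}}(\ker S)\equiv d \pmod 2$.

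To finish part~(i), I would verify pairwise $\mathbb{C}$-orthogonality of these summands: if $V$ and $W$ are distinct summands, they are $\mathbb{R}$-orthogonal by construction, and for $\mathbf{x}\in V$, $\mathbf{y}\in W$ the term $\langle S\mathbf{x},\mathbf{y}\rangle_{\mathbb{R}}$ vanishes because $S\mathbf{x}\in V$ ($S$-invariance) while $V\perp_{\mathbb{R}}W$; by the displayed identity, $\mathbf{x}^H\mathbf{y}=0$. For part~(ii), using fact~(c), the operator $\mathbf{P}_{\mathcal{E}}\mathbf{P}_{\mathrm{j}\mathcal{E}}$ restricted to $\mathcal{E}$ equals $-S^{2}$, which is $\mathbf{0}$ on every summand contained in $\ker S$ and equals $\mu_{k}^{2}\,\mathbf{Id}$ on $\mathcal{E}_{k}$ (the square of a $90^{\circ}$ rotation being $-\mathbf{Id}$); hence each summand lies inside a single eigenspace of $\mathbf{P}_{\mathcal{E}}\mathbf{P}_{\mathrm{j}\mathcal{E}}$, with eigenvalue $0$ or $\mu_{k}^{2}$.

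I expect the only real difficulty to be organizational rather than conceptual: one must track carefully whether $\langle\cdot,\cdot\rangle_{\mathbb{R}}$ or $\langle\cdot,\cdot\rangle_{\mathbb{C}}$ is in play at each step, and prove cleanly the operator identities $\mathbf{P}_{\mathrm{j}\mathcal{E}}=J\mathbf{P}_{\mathcal{E}}J^{-1}$ and $\mathbf{P}_{\mathcal{E}}\mathbf{P}_{\mathrm{j}\mathcal{E}}|_{\mathcal{E}}=-S^{2}$, together with the sign identity $\mathfrak{Im}(\mathbf{x}^H\mathbf{y})=\langle S\mathbf{x},\mathbf{y}\rangle_{\mathbb{R}}$. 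Once these are in place, both claims follow immediately from the classical normal form of a real skew-symmetric matrix.
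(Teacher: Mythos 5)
Your proposal is correct, but it follows a genuinely different route from the paper's main argument. The paper proves (i) variationally: it maximizes $\langle \mathbf{v},\mathrm{j}\mathbf{w}\rangle_{\mathbb{R}}$ over unit vectors of $\mathcal{E}$, uses Lagrange stationarity to show the maximizing pair $(\mathbf{v}_1,\mathbf{w}_1)$ is $\mathbb{R}$-orthogonal and that any vector of $\mathcal{E}$ that is $\mathbb{R}$-orthogonal to the pair is automatically $\mathbb{C}$-orthogonal to it, and then recurses on the remaining subspace; part (ii) is deduced from the stationarity relations combined with $\mathbf{P}_{\mathcal{E}}(\mathrm{j}\mathbf{x})=\mathrm{j}\mathbf{P}_{\mathrm{j}\mathcal{E}}\mathbf{x}$. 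Your argument via the skew-adjoint operator $S\mathbf{x}=\mathbf{P}_{\mathcal{E}}(\mathrm{j}\mathbf{x})$ and its real normal form is essentially the coordinate-free version of the ``more algebraic proof'' the paper only sketches at the end of its appendix: there, the skew-symmetric matrix $\mathfrak{Im}\{\mathbf{U}^H\mathbf{U}\}$ (which is, up to sign, the matrix of your $S$ in an $\mathbb{R}$-orthonormal basis $\mathbf{U}$) is put in Youla/real-Schur form. What your route buys: part (ii) is immediate from the identity $\mathbf{P}_{\mathcal{E}}\mathbf{P}_{\mathrm{j}\mathcal{E}}\vert_{\mathcal{E}}=-S^{2}$, which equals $\mu_k^2\,\mathbf{Id}$ on each block and $\mathbf{0}$ on $\ker S$, and the construction matches what Algorithm~1 actually computes; what the paper's variational proof buys is the geometric reading of the constants $c_i$ as cosines of principal angles between $\mathcal{E}$ and $\mathrm{j}\mathcal{E}$. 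Two small remarks: your bound $\mu_k^2\le 1$ deserves its one-line justification ($\Vert S\mathbf{x}\Vert_2=\Vert\mathbf{P}_{\mathcal{E}}(\mathrm{j}\mathbf{x})\Vert_2\le\Vert\mathbf{x}\Vert_2$), and your sign for the eigenvalue is the correct one — since $\langle \mathbf{x},\mathbf{P}_{\mathcal{E}}\mathbf{P}_{\mathrm{j}\mathcal{E}}\mathbf{x}\rangle_{\mathbb{R}}=\Vert\mathbf{P}_{\mathrm{j}\mathcal{E}}\mathbf{x}\Vert_2^2\ge 0$ for $\mathbf{x}\in\mathcal{E}$, the restricted operator is positive semidefinite, so the eigenvalue is $+c_k^{2}$, whereas the paper's appendix displays $-c_1^{2}$ (traceable to a sign slip in its relation for $\mathbf{P}_{\mathcal{E}}(\mathrm{j}\mathbf{v}_1)$, which equals $-c_1\mathbf{w}_1$); this discrepancy does not affect the lemma's statement, which only asserts membership in eigenspaces.
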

\begin{proof}
\new{This lemma is proven in appendix~\ref{proof:lemma1}.}
\end{proof}
}

\moved{
Applying lemma~\ref{lem:decomp_V_theta} to the variation space $\mathcal{V}_{\boldsymbol{\theta}}$ (assuming it is of dimension $N_p$ and $N_p$ is even), it is possible to decompose it as
\begin{equation}
\mathcal{V}_{\boldsymbol{\theta}} = \text{span}_{\mathbb{R}}\left( \left\{\mathbf{v}_1,\mathbf{w}_1,\dots,\mathbf{v}_{\frac{N_p}{2}},\mathbf{w}_{\frac{N_p}{2}} \right\}\right)
\label{eq:decomp_V_theta}
\end{equation}
where $\mathbf{v}_m^H\mathbf{v}_n = \delta_{mn}$, $\mathbf{w}_m^H\mathbf{w}_n = \delta_{mn}$ and $\mathbf{v}_m^H\mathbf{w}_n = -\delta_{mn}\mathrm{j}c_m$ (\new{with $0\leq c_m \leq 1$}, and $\delta$ being the Kronecker symbol). \new{The quantities $c_m$ can be seen as the lack of $\mathbb{C}$-orthogonality of the $\mathbb{R}$-orthogonal basis $\left\{\mathbf{v}_1,\mathbf{w}_1,\dots,\mathbf{v}_{\frac{N_p}{2}},\mathbf{w}_{\frac{N_p}{2}} \right\}$.} Let us introduce the matrix
\begin{equation}
\mathbf{V}\triangleq \left(\mathbf{v}_1,\mathbf{w}_1,\dots,\mathbf{v}_{\frac{N_p}{2}},\mathbf{w}_{\frac{N_p}{2}} \right)
\label{eq:matrix_V}
\end{equation}
whose columns form an $\mathbb{R}$-orthonormal basis of $\mathcal{V}_{\boldsymbol{\theta}}$.
 Similarly, if $N_p$ is odd, the decomposition reads 
\begin{equation}
\mathcal{V}_{\boldsymbol{\theta}} = \text{span}_{\mathbb{R}}\left( \left\{\mathbf{v}_1,\mathbf{w}_1,\dots,\mathbf{v}_{\left\lfloor\frac{N_p}{2}\right\rfloor},\mathbf{w}_{\left\lfloor\frac{N_p}{2}\right\rfloor},\mathbf{v}_{\left\lfloor\frac{N_p}{2}\right\rfloor+1}\right\}\right),
\label{eq:decomp_V_theta_odd}
\end{equation}
where $\mathbf{v}_m^H\mathbf{v}_n = \delta_{mn}$, $\mathbf{w}_m^H\mathbf{w}_n = \delta_{mn}$ and $\mathbf{v}_m^H\mathbf{w}_n = -\delta_{mn}\mathrm{j}c_m$, and the matrix $\mathbf{V}$ can be built the same way. 
\new{Said differently, this result means that for any matrix $\mathbf{U}$ whose columns form an $\mathbb{R}$-orthogonal basis of $\mathcal{V}_{\boldsymbol{\theta}}$, there exists a real orthogonal matrix $\mathbf{B}$ such that
\begin{equation}
\mathbf{B}^T\mathfrak{Im}\{\mathbf{U}^H\mathbf{U}\}\mathbf{B} = 
\begin{pmatrix}
0&-c_1 & \\
c_1 & 0 & \\
& & 0&-c_2 \\
& & c_2& 0 \\ 
& & & &\ddots \\
\end{pmatrix} \triangleq \boldsymbol{\Gamma},
\label{eq:youla} 
\end{equation}
and then $\mathbf{V} = \mathbf{UB}$. In practice, the matrices $\mathbf{B}$ and $\boldsymbol{\Gamma}$ can be obtained by computing the real Schur decomposition of the matrix $\mathfrak{Im}\{\mathbf{U}^H\mathbf{U}\}$ and reordering the diagonal blocks.
}
}


\subsubsection{Optimal CRB and observation matrices}Using this decomposition allows us to state the main result of this paper in the following theorem.
\begin{theorem}
The minimal value of the CRB is
$$
\text{CRB}_\text{min}(\boldsymbol{\theta}) \triangleq \frac{2\sigma^2}{P}\left( \sum\nolimits_{k=1}^{\lfloor\frac{N_p}{2}\rfloor} \frac{1}{\sqrt{1+c_k}}  + \frac{\epsilon}{2}\right)^2,
$$
where the scalars $c_k$ are defined at \eqref{eq:decomp_V_theta} (resp. \eqref{eq:decomp_V_theta_odd}) and $\epsilon=0$ (resp. $\epsilon=1$) if $N_p$ is even (resp. odd).

It is attained with the observation matrix of minimal size
$$
\mathbf{M} = \sqrt{\frac{P}{C}}   \left( \frac{\mathbf{v}_1 + \mathrm{j}\mathbf{w}_1}{(1+c_1)^{\frac{3}{4}}},\dots, \frac{\mathbf{v}_{\frac{N_p}{2}} + \mathrm{j}\mathbf{w}_{\frac{N_p}{2}}}{(1+c_{\frac{N_p}{2}})^{\frac{3}{4}}} \right),
$$
where $C\triangleq 2\sum_{l=1}^{\frac{N_p}{2}}\frac{1}{\sqrt{1+c_l}}$ and the vectors $\mathbf{v}_k, \mathbf{w}_k$ are defined at \eqref{eq:decomp_V_theta} if $N_p$ is even, and with
$$
\mathbf{M} = \sqrt{\frac{P}{C}}   \left( \frac{\mathbf{v}_1 + \mathrm{j}\mathbf{w}_1}{(1+c_1)^{\frac{3}{4}}},\dots, \frac{\mathbf{v}_{\left\lfloor \frac{N_p}{2}\right\rfloor} + \mathrm{j}\mathbf{w}_{\left\lfloor \frac{N_p}{2}\right\rfloor}}{(1+c_{\left\lfloor \frac{N_p}{2}\right\rfloor})^{\frac{3}{4}}}, \mathbf{v}_{\left\lfloor \frac{N_p}{2}\right\rfloor + 1} \right),
$$
where $C\triangleq 2\sum_{l=1}^{\left\lfloor \frac{N_p}{2}\right\rfloor}\frac{1}{\sqrt{1+c_l} }+ 1$ and the vectors $\mathbf{v}_k, \mathbf{w}_k$ are defined at \eqref{eq:decomp_V_theta_odd} if $N_p$ is odd.
\label{thm:optim}
\end{theorem}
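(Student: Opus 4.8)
The plan is to combine the CRB expression from Theorem~\ref{thm:CRB} with the special basis $\{\mathbf{v}_k,\mathbf{w}_k\}$ of $\mathcal{V}_{\boldsymbol{\theta}}$ from \eqref{eq:decomp_V_theta}, and then solve the constrained optimization \eqref{eq:pb_optim} in two stages: first reduce to the case where $\mathbf{M}$ has exactly $\lceil N_p/2\rceil$ columns all lying in $\mathcal{V}_{\boldsymbol{\theta}}$, and then minimize over such matrices. I treat the even case $N_p=2q$; the odd case adds a harmless one-dimensional block and follows by the same computation with the extra real vector $\mathbf{v}_{q+1}$ appended.

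First I would argue the reduction. By Corollary~\ref{cor:intrinsic_CRB}, $\text{CRB}(\boldsymbol{\theta},\mathbf{M})=\frac{\sigma^2}{2}\text{Tr}\big[([\mathbf{MM}^H]_{\mathcal{V}_{\boldsymbol{\theta}}})^{-1}\big]$, so only the compression of $\mathbf{MM}^H$ to $\mathcal{V}_{\boldsymbol{\theta}}$ matters. Writing $\mathbf{M}=\mathbf{M}_\parallel+\mathbf{M}_\perp$ where the columns of $\mathbf{M}_\parallel$ are the orthogonal projections onto $\mathcal{V}_{\boldsymbol{\theta}}$, one checks that the compression of $\mathbf{MM}^H$ dominates (in the positive-semidefinite order) the compression of $\mathbf{M}_\parallel\mathbf{M}_\parallel^H$, while $\Vert\mathbf{M}_\parallel\Vert_F^2\le\Vert\mathbf{M}\Vert_F^2=P$; since $\text{Tr}[\mathbf{A}^{-1}]$ is decreasing in $\mathbf{A}$ and decreasing under scaling up, any optimum can be assumed to have all columns in $\mathcal{V}_{\boldsymbol{\theta}}$. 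Then, projecting the columns of $\mathbf{M}$ onto the span of the first $q$ columns only cannot increase the objective either, so we may take $N_m=q$. Thus it suffices to write $\mathbf{M}=\mathbf{V}\mathbf{A}$ with $\mathbf{V}$ the matrix \eqref{eq:matrix_V} and $\mathbf{A}\in\mathbb{C}^{N_p\times q}$, and minimize $\text{Tr}\big[\mathfrak{Re}\{\mathbf{V}^H\mathbf{V}\mathbf{A}\mathbf{A}^H\mathbf{V}^H\mathbf{V}\}^{-1}\big]$ subject to $\text{Tr}(\mathbf{A}^H\mathbf{G}\mathbf{A})=P$, where $\mathbf{G}\triangleq\mathfrak{Re}\{\mathbf{V}^H\mathbf{V}\}=\mathbf{Id}$, i.e. $\Vert\mathbf{A}\Vert_F^2=P$.

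Next I would exploit the block structure. From \eqref{eq:decomp_V_theta}, $\mathbf{V}^H\mathbf{V}=\mathbf{Id}+\mathrm{j}\,\boldsymbol{\Gamma}$ with $\boldsymbol{\Gamma}$ the block-diagonal matrix in \eqref{eq:youla}, so the problem decouples over the $q$ two-dimensional blocks. Within block $k$, the relevant $2\times2$ Hermitian matrix is $\mathbf{G}_k=\begin{pmatrix}1&-\mathrm{j}c_k\\\mathrm{j}c_k&1\end{pmatrix}$, with eigenvalues $1+c_k$ and $1-c_k$ and eigenvectors $\tfrac{1}{\sqrt2}(1,\mp\mathrm{j})^T$; the eigenvector for the \emph{larger} eigenvalue $1+c_k$ is $\tfrac{1}{\sqrt2}(\mathbf{v}_k+\mathrm{j}\mathbf{w}_k)$ after mapping back through $\mathbf{V}$. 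A column of $\mathbf{M}$ supported on block $k$ contributes a rank-one term, and a short Lagrangian / Cauchy–Schwarz computation shows that, per unit of allotted Frobenius energy, the objective is minimized by aligning that column with the top eigenvector, giving a per-block contribution proportional to $1/(1+c_k)$ before power allocation. Allocating a total budget $P$ across blocks to minimize $\sum_k \frac{a_k}{p_k}$ (with $a_k=1/(1+c_k)$, $\sum_k p_k=P$) is a standard water-filling-type problem solved by $p_k\propto\sqrt{a_k}$, yielding optimal value $\frac{1}{P}\big(\sum_k\sqrt{a_k}\big)^2=\frac{1}{P}\big(\sum_k\frac{1}{\sqrt{1+c_k}}\big)^2$; multiplying by $\sigma^2/2$ and tracking the factor $2$ gives $\text{CRB}_\text{min}$. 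Unwinding the optimal $p_k$ and eigenvector directions reproduces exactly the stated $\mathbf{M}=\sqrt{P/C}\big(\frac{\mathbf{v}_1+\mathrm{j}\mathbf{w}_1}{(1+c_1)^{3/4}},\dots\big)$ with the stated normalization $C$.

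The main obstacle, I expect, is making the reduction step fully rigorous: showing that the optimum is attained with columns in $\mathcal{V}_{\boldsymbol{\theta}}$, with exactly $\lceil N_p/2\rceil$ of them, and that no cross-block coupling in $\mathbf{A}$ can help. The cleanest route is to note that for fixed $\text{Tr}(\mathbf{A}\mathbf{A}^H)$, $\mathbf{A}\mathbf{A}^H$ ranges over all PSD matrices of that trace and rank $\le q$, so after conjugating the objective $\text{Tr}\big[(\mathbf{G}\mathbf{A}\mathbf{A}^H\mathbf{G})^{-1}\big]$ one is minimizing $\text{Tr}[(\mathbf{G}^{1/2}\mathbf{S}\mathbf{G}^{1/2})^{-1}]$ over PSD $\mathbf{S}$ with prescribed trace and rank; monotonicity of the trace-inverse and a diagonalization in the $\mathbf{G}$-eigenbasis then shows the minimizer is supported on the top $q$ eigenvectors of $\mathbf{G}$, which — because $\mathbf{G}$ has $q$ eigenvalues equal to $1+c_k$ and $q$ equal to $1-c_k\le 1+c_k$ — are precisely the $q$ directions $\mathbf{v}_k+\mathrm{j}\mathbf{w}_k$, and forces rank exactly $q$ (hence $N_m=q$ suffices and is necessary by Corollary~\ref{cor:minobs}). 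I would also need the elementary inequality that the compression of $\mathbf{M}_\perp\mathbf{M}_\perp^H$ to $\mathcal{V}_{\boldsymbol{\theta}}$ is PSD to justify discarding $\mathbf{M}_\perp$; this is immediate. The remaining per-block optimization and the water-filling are routine.
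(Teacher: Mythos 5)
Your high-level plan (work in the basis of lemma~\ref{lem:decomp_V_theta}, identify the per-block directions $\mathbf{v}_k+\mathrm{j}\mathbf{w}_k$, then water-fill power $\propto 1/\sqrt{1+c_k}$) does mirror what the paper ultimately does, but two of your key technical steps are genuinely wrong, and they are wrong precisely at the real-versus-complex subtlety that drives the whole result. First, the reduction of the power constraint is incorrect: writing $\mathbf{M}=\mathbf{VA}$ with $\mathbf{A}$ \emph{complex}, you have $\Vert\mathbf{M}\Vert_F^2=\mathrm{Tr}(\mathbf{A}^H\mathbf{V}^H\mathbf{V}\mathbf{A})$ with $\mathbf{V}^H\mathbf{V}=\mathbf{Id}+\mathrm{j}\boldsymbol{\Gamma}$ as in \eqref{eq:youla}, and the term $\mathrm{j}\,\mathrm{Tr}(\mathbf{A}^H\boldsymbol{\Gamma}\mathbf{A})$ is \emph{real} and nonzero in general (the trace of a skew-Hermitian matrix is purely imaginary), so the constraint is \emph{not} $\Vert\mathbf{A}\Vert_F^2=P$; replacing $\mathbf{V}^H\mathbf{V}$ by its real part is only legitimate for real $\mathbf{A}$. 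The claimed optimal matrix itself is a counterexample: its $k$-th column $\propto\mathbf{v}_k+\mathrm{j}\mathbf{w}_k$ has squared norm $2(1+c_k)$ while its coordinate vector has squared norm $2$. If you carry your constraint through honestly, the per-block water-filling gives weights $\propto 1/(1+c_k)$ and an optimal value $\frac{2\sigma^2}{P}\big(\sum_k\frac{1}{1+c_k}\big)^2$ rather than the stated $\frac{2\sigma^2}{P}\big(\sum_k\frac{1}{\sqrt{1+c_k}}\big)^2$ and the $(1+c_k)^{-3/4}$ normalizations, so this is not a cosmetic slip. Second, your proposed route to rigorize the decoupling — dropping the $\mathfrak{Re}\{\cdot\}$ and minimizing $\mathrm{Tr}\big[(\mathbf{G}^{1/2}\mathbf{S}\mathbf{G}^{1/2})^{-1}\big]$ over PSD $\mathbf{S}$ of rank at most $N_p/2$ — cannot work: that matrix is $N_p\times N_p$ of rank at most $N_p/2$, hence singular, and the objective is identically infinite. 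The entire reason $\lceil N_p/2\rceil$ columns suffice is that the \emph{real} compression $\mathfrak{Re}\{\mathbf{V}^H\mathbf{MM}^H\mathbf{V}\}$ can have real rank up to twice the complex rank of $\mathbf{M}$; discarding the real part destroys exactly the mechanism you need. (Relatedly, your initial reduction confuses the real span with the complex span: the optimal columns $\mathbf{v}_k+\mathrm{j}\mathbf{w}_k$ do not lie in $\mathcal{V}_{\boldsymbol{\theta}}$ in general, and your PSD-domination argument for discarding $\mathbf{M}_\perp$ ignores the cross terms $\mathbf{M}_\parallel\mathbf{M}_\perp^H+\mathbf{M}_\perp\mathbf{M}_\parallel^H$; the correct reduction projects the columns $\mathbb{C}$-orthogonally onto $\mathrm{span}_\mathbb{C}(\mathcal{V}_{\boldsymbol{\theta}})$, which leaves the compression unchanged while only reducing power.)

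For comparison, the paper keeps the real part throughout and never parametrizes away the power constraint: it lower-bounds $\mathrm{Tr}\big[\mathfrak{Re}\{\mathbf{V}^H\mathbf{MM}^H\mathbf{V}\}^{-1}\big]$ by $\sum_k\big(\Vert\mathbf{M}^H\mathbf{v}_k\Vert_2^{-2}+\Vert\mathbf{M}^H\mathbf{w}_k\Vert_2^{-2}\big)$ using $(\mathbf{A}^{-1})_{ii}\ge 1/a_{ii}$ (equality iff the matrix is diagonal), rewrites each block in terms of the $\mathbb{C}$-orthogonal unit vectors $\tilde{\mathbf{u}}_k^{\pm}\propto\mathbf{v}_k\pm\mathrm{j}\mathbf{w}_k$, eliminates the cross term via the convexity inequality $\frac{1}{a+b}+\frac{1}{a-b}\ge\frac{2}{a}$, bounds the power correctly by $\sum_k P_k^++P_k^-\le\Vert\mathbf{M}\Vert_F^2$ (equality iff $\mathrm{im}_\mathbb{C}(\mathbf{M})$ lies in the complex span), and only then solves the scalar allocation and verifies achievability with the explicit $\mathbf{M}_{\mathrm{opt}}$. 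Your proposal needs to be repaired along these lines: state the constraint as $\mathrm{Tr}(\mathbf{A}^H(\mathbf{Id}+\mathrm{j}\boldsymbol{\Gamma})\mathbf{A})=P$ (or, equivalently, budget actual column energies), and replace the rank-constrained trace-inverse argument by inequalities that respect the $\mathfrak{Re}\{\cdot\}$ structure, together with explicit equality conditions to establish both the lower bound and its attainability.
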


\begin{proof}
\new{This theorem is proven in appendix~\ref{proof:thm3}.}
\end{proof}

This theorem exhibits the fact that the optimal CRB depends on the noise level $\sigma^2$, the observation power $P$ and the properties of the variation space $\mathcal{V}_{\boldsymbol{\theta}}$, namely its dimension $N_p$ and the quantities $c_k$. Moreover it can be bounded above and below as
\begin{equation}
\frac{\sigma^2N_p^2}{4P} \leq \text{CRB}_\text{min}(\boldsymbol{\theta}) \leq \frac{\sigma^2N_p^2}{2P},
\label{eq:bound_CRB}
\end{equation}
with an equality on the left if and only if $N_p$ is even and $c_k=1$, $\forall k$ ($\mathcal{V}_{\boldsymbol{\theta}}$ is then a $\mathbb{C}$-vector space), and equality on the right if and only if $c_k=0$, $\forall k$ ($\mathcal{V}_{\boldsymbol{\theta}}$ is then $\mathbb{R}$-orthogonal to $\mathrm{j}\mathcal{V}_{\boldsymbol{\theta}}$). 

\new{\subsection{Observation matrix design}
\label{ssec:design}}
\new{
Based on theorem~\ref{thm:optim}, it is possible to build optimal observation matrices of minimal size $\lceil \frac{N_p}{2} \rceil$, provided the variation space is known. Since the variation space depends itself on the parameters to estimate, this result may seem of little use. However, in some cases, an estimation $\hat{\mathcal{V}}_{\boldsymbol{\theta}}$ of the variation space can be obtained. For example, this is the case for MIMO systems operating in FDD mode using a physical model, where the variation space can be determined based on the previous uplink or downlink channel estimates since it depends only on the directions of arrival of the channel paths, which vary in general slowly. This interesting application is studied in details in the next section.\begin{algorithm}[htb]
\new{\caption{\new{Observation matrix determination ($N_p$ even)}}
\begin{algorithmic}[1] 
\REQUIRE An estimate $\hat{\mathcal{V}}_{\boldsymbol{\theta}}$ of the variation space (i.e. a matrix $\mathbf{G}$ whose columns are a generating family of $\hat{\mathcal{V}}_{\boldsymbol{\theta}}$ with real scalars), the observation power $P$.
\STATE Find a matrix $\mathbf{U}$ whose columns form an $\mathbb{R}$-orthonormal basis of $\hat{\mathcal{V}}_{\boldsymbol{\theta}}$: \\  Real QR decomposition \eqref{eq:qr}: $\begin{pmatrix}
\mathfrak{Re}\{\mathbf{G}\} \\
\mathfrak{Im}\{\mathbf{G}\}
\end{pmatrix} =  \mathbf{QR}$, \\
$\mathbf{U} \leftarrow \frac{\widehat{\partial \mathbf{h}}}{\partial \boldsymbol{\theta}}\mathbf{R}^{-1}$
\STATE Decompose $\hat{\mathcal{V}}_{\boldsymbol{\theta}}$ as in lemma~\ref{lem:decomp_V_theta}:\\ Real Schur decomposition \eqref{eq:youla}: $ \mathfrak{Im}\{\mathbf{U}^H\mathbf{U}\} = \mathbf{B}\boldsymbol{\Gamma}\mathbf{B}^T$,  \\
$c_1 \leftarrow \gamma_{21}$, $c_2 \leftarrow \gamma_{43}$,$\dots$,$c_{\frac{N_p}{2}} \leftarrow \gamma_{N_p,N_p-1}$,\\ $\mathbf{V} \leftarrow \mathbf{UB}$
\STATE Build the observation matrix according to theorem~\ref{thm:optim}: \\~\\ $\mathbf{S} \leftarrow \left(\begin{matrix}
1 & 0 &  \dots \\
\mathrm{j} &0 &\dots \\
0 & 1 &  \dots \\
0 & \mathrm{j}  &\dots \\
\vdots &\vdots  &\ddots \\
\end{matrix}\right) \in \mathbb{C}^{N_p \times \frac{N_p}{2}} $,
 $C \leftarrow 2\sum_{l=1}^{\frac{N_p}{2}}\frac{1}{\sqrt{1+c_l}}$, \\ ~\\~\\
$\mathbf{D} \leftarrow \sqrt{\frac{P}{C}}\left(\begin{smallmatrix}
\frac{1}{(1+c_1)^{\frac{3}{4}}} & 0 &  \\
0 &\frac{1}{(1+c_2)^{\frac{3}{4}}} &  \\
 & & \ddots &
\end{smallmatrix}\right) \in \mathbb{C}^{\frac{N_p}{2} \times \frac{N_p}{2}} $, \\
$\mathbf{M} \leftarrow \mathbf{VSD}$
\ENSURE The observation matrix $\mathbf{M} \in \mathbb{C}^{N_d \times \frac{N_p}{2}}$ that is optimal with respect to $\hat{\mathcal{V}}_{\boldsymbol{\theta}}$.
\end{algorithmic}
\label{alg:single}}
\end{algorithm}
}

\new{The strategy we propose to build observation matrices based on an estimate of the variation space is given in algorithm~\ref{alg:single} for an even number of parameters (the algorithm is almost the same for an odd number, except for the third step being slightly modified according to theorem~\ref{thm:optim}). It comprises three steps. The first one amounts to find an $\mathbb{R}$-orthogonal basis of $\hat{\mathcal{V}}_{\boldsymbol{\theta}}$. The second one corresponds to apply the decomposition of lemma~\ref{lem:decomp_V_theta} to $\hat{\mathcal{V}}_{\boldsymbol{\theta}}$. Finally the third one uses the result of theorem~\ref{thm:optim} based on the aforementioned decomposition to build the observation matrix. Overall, the computational complexity of this algorithm is $\mathcal{O}(N_dN_p^2)$.
}

\section{Illustrations of the results}
\label{sec:illustration}
\new{
Let us now illustrate the applicability of the presented theoretical results, whose ultimate goal is to facilitate the design of optimal and short pilot sequences for any deterministic channel model. To do so, we consider a massive MIMO system and compare various models. 
\newline {\bf Scenarios.} Three scenarios are chosen to apply our results:
\begin{itemize}[leftmargin=*]
\item First, in section~\ref{ssec:LS}, the classical least squares model is studied in the proposed framework, showing that our approach is trivial in that case and allows to retrieve previous results. This model makes no use of any a priori information, so that pilot sequences have to be as long as the channel dimension and the optimal CRB is proportional to the square of the channel dimension. 
\item Then, in section~\ref{ssec:CP}, physical models are investigated, which allow to design shorter pilot sequences and theoretically lead to better performance, thanks to the lower number of parameters to estimate. In that case, our framework allows to theoretically justify previous approaches based on estimates of the channel directions of arrival (DoA), assuming their reciprocity \cite{Xie2017} or time persistence \cite{Gao2015}. These approaches lead to the length of optimal pilot sequences being proportional to the number of dominant paths and the optimal CRB being proportional to the square of this quantity. However, such methods induce a biased model, due to the fact that DoA estimates are considered perfect and kept fixed, \newnew{which lacks robustness in practical scenarios}. 
\item Finally, in section~\ref{ssec:newstrat}, still in the physical models context, we show that our result allows to suggest a new strategy \newnew{that is more robust to} the DoA estimation error. It is based on an update of the DoA estimates, and comes with better theoretical guarantees than previous approaches. Indeed, it corrects their bias and causes only a small increase of the optimal CRB and pilot sequence length (due to the DoA update). This approach is compared numerically to the one of section~\ref{ssec:CP}, taking into account the DoA estimation error, showing \newnew{empirically} its advantage \newnew{ in terms of robustness}.
\end{itemize}
}

\noindent {\bf Setting.} In the general case where the channel to estimate is between $N_t$ transmit antennas and $N_r$ receive antennas, on $N_f$ subcarriers, the channel is a complex vector of dimension $N_d = N_rN_tN_f$ denoted $\mathbf{h} \in \mathbb{C}^{N_rN_tN_f}$, where $h_{ijk}$ is the channel between the $j$-th transmit antenna and the $i$-th receive antenna on the $k$-th subcarriers. 
The observation matrix $\mathbf{M}$ takes a particular form in this context, \new{and can be linked directly to the sent pilot sequence}. Indeed, if the transmitter sends a pilot sequence of length $T$ corresponding to the matrix $\mathbf{X} \in \mathbb{C}^{N_t \times T}$ on $N_{\text{ps}}$ pilot subcarriers, then the signal at the receive antennas can be written as in \eqref{eq:observation} with
\begin{equation}
\mathbf{M}  = \mathbf{Id}_{N_r} \otimes \mathbf{X} \otimes \mathbf{F} \in \mathbb{C}^{N_rN_tN_f \times N_rTN_\text{ps}},
\label{eq:observation_matrix}
\end{equation}
where $\mathbf{F} \in \{0,1\}^{N_f \times N_\text{ps}}$ is a column-sampled identity matrix, keeping only the columns corresponding to the selected pilot subcarriers. In such a setting, the number of complex observations is $N_m = N_rTN_\text{ps}$, and the transmitted power is $P_t \triangleq N_\text{ps} \left\Vert \mathbf{X} \right\Vert_2^2$. On the other hand, the observation power which is constrained in the optimization problem~\eqref{eq:pb_optim}, is expressed  $P =  \left\Vert \mathbf{M} \right\Vert_2^2 = N_rN_{\text{ps}}\left\Vert \mathbf{X} \right\Vert_2^2$. We thus have $P = N_rP_t$, acknowledging the fact that adding receive antennas increases the received power $P$ without changing the transmitted power $P_t$. 

\new{
In this section, let us perform the analysis considering a massive MIMO setting where the base station is equipped with an uniform linear array (ULA) with half-wavelength separated antennas aligned with the $y$-axis, and user terminals are equipped with a single antenna ($N_r=1$). 
Let us also consider a single subcarrier ($N_f=1$) for ease of exposition, but note that the study straightforwardly extends to the multi-carrier case. These assumptions directly imply the direct equality of the observation matrix and the pilot sequences matrix: 
\begin{equation}
\mathbf{M} = \mathbf{X},\, P=P_t,
\label{eq:setting}
\end{equation}
which greatly simplifies the following subsections.}

\subsection{Application to the least squares model}
\label{ssec:LS}
The most direct and simple way to parameterize the channel with no a priori is to take as $N_p = 2N_t = 2N_d$ parameters the real and imaginary parts of the channel entries, 
$$
\boldsymbol{\theta}_{\text{LS}} \triangleq (\mathfrak{Re}(\mathbf{h})^T, \mathfrak{Im}(\mathbf{h})^T)^T \in \mathbb{R}^{2N_t},
$$ 
This leads to a linear channel model, expressed in function of the parameters as 
\begin{equation}
\mathbf{h}_{\text{LS}}(\boldsymbol{\theta}_{\text{LS}}) = \begin{pmatrix}
\mathbf{Id},\mathrm{j}\mathbf{Id}
\end{pmatrix}\boldsymbol{\theta}_{\text{LS}}.
\label{eq:LS_channel}
\end{equation}
The observation defined in \eqref{eq:observation} then reads $\mathbf{y} = (\mathbf{M}^H,\mathrm{j}\mathbf{M}^H)\boldsymbol{\theta}_{\text{LS}} + \mathbf{n}$ and the maximum likelihood estimation problem becomes a least squares problem, hence the name of the model. In this case, $\frac{\partial \mathbf{h}}{\partial\boldsymbol{\theta}_{\text{LS}} } = (\mathbf{Id},\mathrm{j}\mathbf{Id})$, which, following the definition of the variation space gives
\begin{equation}
\mathcal{V}_{\boldsymbol{\theta}_{\text{LS}}} =  \mathbb{C}^{N_t}.
\label{eq:v_theta_LS}
\end{equation}
\new{Regarding the framework proposed in this paper, this is a trivial case since the variation space is independent of the parameters value, due to the linearity of the model. Consequently, no estimation of the variation space is needed to design optimal pilot sequences.} Indeed, this particular variation space can decomposed according to lemma~\ref{lem:decomp_V_theta} as
$$
\mathcal{V}_{\boldsymbol{\theta}_{\text{LS}}} = \text{span}_{\mathbb{R}}\left(\left\{\mathbf{b}_1,-\mathrm{j}\mathbf{b}_1,\dots,\mathbf{b}_{N_t},-\mathrm{j}\mathbf{b}_{N_t}\right\}\right),
$$
where $\{\mathbf{b}_1,\dots,\mathbf{b}_{N_t}\}$ is any $\mathbb{C}$-orthonormal basis of $\mathbb{C}^{N_t}$, and $c_1=\dots = c_{N_t}=1$. 

\noindent {\bf Optimal CRB.} Applying theorem~\ref{thm:optim}, the optimal CRB of this model is then
\begin{equation}
\text{CRB}_\text{min}(\boldsymbol{\theta}_{\text{LS}}) = \frac{\sigma^2N_t^2}{P_t}.
\label{eq:optim_LS}
\end{equation}
It is attained for observation matrices (pilot sequences) of the form
\begin{equation}
\mathbf{X}_{\text{LS}} = \mathbf{M}_{\text{LS}} =  \sqrt{\frac{P}{N_t}}\left(\mathbf{b}_1,\dots,\mathbf{b}_{N_t} \right).
\label{eq:seq_LS}
\end{equation}
Such matrices have $N_t = \frac{N_p}{2}$ columns, which, according to corollary~\ref{cor:minobs} is minimal for identifiability to be possible.

 Equations \eqref{eq:optim_LS} and \eqref{eq:seq_LS} are nothing but a restatement of a well-known result \cite{Biguesh2006}. The method we propose here indeed allows to derive results for any linear channel model. However, it is more powerful since it generalizes also to nonlinear models, as shown in the next subsections. 

\subsection{Application to physical models}
\label{ssec:CP}

Another way to parameterize the channel is to assume that $\mathbf{h}$ is the sum of $L$ atomic channels corresponding to distinct physical paths, characterized by their direction of departure (DoD) $\overrightarrow{u_{t}}$, direction of arrival (DoA) $\overrightarrow{u_{r}}$, delay $\tau$ and complex gain $\beta$. Parameters of this kind of model are thus given by 
\begin{equation}
 \boldsymbol{\theta} = \left[\left(\mathfrak{Re}(\beta_l),\mathfrak{Im}(\beta_l),\overrightarrow{u_{r,l}},\overrightarrow{u_{t,l}},\tau_l\right)_{l=1}^L\right]^T.
\label{eq:physical_params}
\end{equation} 
The corresponding number of parameters is $N_p = mL$, where $m$ is the number of real parameters to be estimated per physical path. The quantity $m$ can take different values depending on the considered setting.

\new{In the massive MIMO setting studied here, \eqref{eq:physical_params} reduces to
\begin{equation}
 \boldsymbol{\theta}_{\text{PHY}} = \left[\left(\mathfrak{Re}(\beta_l),\mathfrak{Im}(\beta_l),\phi_l\right)_{l=1}^L\right]^T,
\label{eq:physical_params_new}
\end{equation}
where $\phi_l$ is the azimuth angle for the $l$-th physical path.
}
\new{The downlink channel is then expressed
\begin{equation}
\mathbf{h}_{\text{PHY}}(\boldsymbol{\theta}_{\text{PHY}}) = \sum_{l=1}^L\beta_l \mathbf{e}(\phi_l),
\label{eq:physical_channel}
\end{equation}
where $\mathbf{e}(\phi) = \frac{1}{\sqrt{N_t}}(\mathrm{e}^{-\mathrm{j}\frac{2\pi}{\lambda}\left(\frac{N_t-1}{4}\right)\sin\phi},\dots,\mathrm{e}^{\mathrm{j}\frac{2\pi}{\lambda}\left(\frac{N_t-1}{4}\right)\sin\phi})^T$ is the steering vector associated with azimuth $\phi$ (for an even number of antennas). This model is nonlinear (although it is linear with respect to the complex gains $\beta_l$). Such physical models are quite standard \cite{Sayeed2002,Bajwa2010} and successful due to the possibility to take $L$ small (typically less than ten) with good channel modeling accuracy (the channel is then called sparse). The variation space for such physical model is expressed
\begin{equation}
\mathcal{V}_{\boldsymbol{\theta}_{\text{PHY}}}  = \text{span}_\mathbb{R} \left(\left\{\mathbf{e}(\phi_l),-\mathrm{j}\mathbf{e}(\phi_l),\frac{\partial\mathbf{e}(\phi_l)}{\partial \phi_l}\right\}_{l=1}^L\right).
\label{eq:variation_phy}
\end{equation}
As opposed to the least squares case, this variation space depends on the parameter value, so that it is not possible to build optimal short-length pilot sequences without relying on some prior information about the parameters. However, it is interesting to notice that the variation space depends only on the azimuth angles $\phi_1,\dots,\phi_L$ and not on the path gains. Fortunately, the azimuth angles vary much more slowly than the path gains and are the same for the uplink and downlink channels, so that one can hope to acquire good estimates of them in order to design pilot sequences.
}
\new{\newline {\bf Angle-constrained estimation strategy.}}
Under these assumptions, it has already been proposed to design short length donwlink pilot sequences considering azimuth angles are already known by the base station, either thanks to uplink channel estimates \cite{Xie2017} (using the channel angle reciprocity), or thanks to previous downlink channel estimates \cite{Gao2015} (using angle time persistence).
\newnew{
In both cases, the communication workflow can be decomposed in two separate phases as follows:
\begin{enumerate}
\item {\bf Paths estimation:} estimate all dominant path angles, either sending long downlink pilot sequences \cite{Gao2015} (duration at least equal to the number of transmit antennas) or using the uplink pilots \cite{Xie2017}.
\item {\bf Channel tracking:} track the downlink channel, by estimating path gains while keeping path angles fixed, sending short downlink pilot sequences (duration proportional to the number of dominant paths). 
\end{enumerate}
Note that the paths estimation phase (step 1) has to be done periodically, in order to take into account changes in path angles, the disappearance of old paths and the appearance of new paths. Otherwise, the link quality may degrade much due to the aging of the path angles information.}

 Let us analyze these methods that we call angle-constrained within our framework. \newnew{We focus here on the channel tracking phase.} Denoting $\hat{\phi}_1,\dots,\hat{\phi}_L$ the estimated azimuths and  $\hat{\mathbf{E}} \in \mathbb{C}^{N_d\times L} \triangleq (\mathbf{e}(\hat{\phi}_1),\dots,\mathbf{e}(\hat{\phi}_L))$ the matrix of corresponding steering vectors, they amount to simplify the estimation problem considering only the path gains remain to estimate, yielding parameters
\begin{equation}
 \boldsymbol{\theta}_{\text{AC}} \triangleq \left[\left(\mathfrak{Re}(\beta_l)_{l=1}^L,\mathfrak{Im}(\beta_l)_{l=1}^L\right)\right]^T.
\label{eq:CP_params}
\end{equation}
It thus corresponds to have $N_p = 2L$, and the channel is expressed 
\begin{equation}
\mathbf{h}_{\text{AC}}(\boldsymbol{\theta}_{\text{AC}}) = \begin{pmatrix}
\hat{\mathbf{E}},\mathrm{j}\hat{\mathbf{E}}
\end{pmatrix}\boldsymbol{\theta}_{\text{AC}}.
\label{eq:CP_channel}
\end{equation}
Considering this model, the variation space simplifies to 
$$
\mathcal{V}_{\boldsymbol{\theta}_{\text{AC}}} = \text{span}_{\mathbb{R}}\left(\left\{\mathbf{e}(\hat{\phi}_1),-\mathrm{j}\mathbf{e}(\hat{\phi}_1),\dots,\mathbf{e}(\hat{\phi}_L),-\mathrm{j}\mathbf{e}(\hat{\phi}_L)\right\}\right).
$$
\newline {\bf Optimal CRB.} Thus, applying theorem~\ref{thm:optim} in that particular case, the optimal CRB of the angle-constrained physical model is bounded as
\begin{equation}
\frac{\sigma^2L^2}{P_t}\leq \text{CRB}_\text{min}(\boldsymbol{\theta}_{\text{PHY}}) \leq 2\times \frac{\sigma^2L^2}{P_t},
\label{eq:optim_LP}
\end{equation}
where the lower bound is attained if the columns of $\hat{\mathbf{E}}$ are mutually orthogonal (which is assumed in \cite{Gao2015,Xie2017}), in which case the decomposition of lemma~\ref{lem:decomp_V_theta} is trivial with $c_1 = \dots = c_L = 1$.
In that case, it is interesting to compare this CRB to the CRB of the least squares model \eqref{eq:optim_LS}: using the physical model allows to divide by $\left(\frac{N_t}{L}\right)^2$ the minimal attainable variance. This potentially huge gain is attained provided the azimuth estimates are perfect, and for downlink pilot sequences of the form
\begin{equation}
\mathbf{X}_{\text{AC}}  = \mathbf{M}_{\text{AC}} =  \sqrt{\frac{P_t}{L}}\left(\mathbf{e}(\hat{\phi}_1),\dots,\mathbf{e}(\hat{\phi}_L) \right).
\label{eq:seq_LP}
\end{equation}
Such matrices have $L = \frac{N_p}{2}$ columns, which, according to corollary~\ref{cor:minobs} is minimal for identifiability to be possible, and is thus the minimal duration of the pilot sequences, that corresponds to the number of estimated channel paths. In practice, the number of estimated paths is small (rarely more than ten), so that such sequences are short when compared to the ones required when using the least squares model (of length $N_t$). Note that equations~\eqref{eq:optim_LP} and \eqref{eq:seq_LP} allow to retrieve the variance and pilot sequences proposed in \cite{Gao2015} and \cite{Xie2017} (without proof of optimality).
~\newline {\bf The bias problem.} It is important to mention that the model simplification implied by the angle-constrained estimation strategy induces bias. Indeed, the obtained channel estimate $\mathbf{h}_{\text{AC}}(\hat{\boldsymbol{\theta}}_{\text{AC}})$ is constrained to belong to the range of the matrix $\hat{\mathbf{E}}$, which is not necessarily the case for the true channel $\mathbf{h}$ since the azimuth estimation is not error free and the azimuth angles may have changed since their estimation. We indeed have
$$
\big\Vert \mathbf{h} - \mathbf{h}_{\text{AC}}(\hat{\boldsymbol{\theta}}_{\text{AC}}) \big\Vert_2^2 \geq \big\Vert \mathbf{h} - \hat{\mathbf{E}}(\hat{\mathbf{E}}^H\hat{\mathbf{E}})^{-1}\hat{\mathbf{E}}^H \mathbf{h} \big\Vert_2^2,
$$
$\hat{\mathbf{E}}(\hat{\mathbf{E}}^H\hat{\mathbf{E}})^{-1}\hat{\mathbf{E}}^H$ being the matrix representing orthogonal projection onto the range of $\hat{\mathbf{E}}$. Combined with the CRB, we get the following bound on the MSE for the angle-constrained channel estimation strategy:
\begin{equation}
\text{MSE}(\hat{\boldsymbol{\theta}}_{\text{AC}}) \geq \text{max}\left(\big\Vert \mathbf{h} - \hat{\mathbf{E}}(\hat{\mathbf{E}}^H\hat{\mathbf{E}})^{-1}\hat{\mathbf{E}}^H \mathbf{h} \big\Vert_2^2, \frac{\sigma^2L^2}{P_t}\right).
\label{eq:mse_ac}
\end{equation}
A high level summary of the \newnew{channel tracking phase of the} angle-constrained channel estimation strategy for a single user in given in algorithm~\ref{alg:angle_constrained}. It fits into a multi-user framework and a complete transmission workflow nicely, as explained in \cite{Gao2015,Xie2017}, but we do not give too much details on this here since the present illustration focuses only on the downlink channel \newnew{tracking} phase.
\begin{algorithm}[htb]
\new{\caption{\new{High level summary of \newnew{channel tracking phase of the} angle-constrained estimation strategy}}
\begin{algorithmic}[1] 
\REQUIRE Estimates $\hat{\phi}_1,\dots,\hat{\phi}_L$ of the azimuth angles (obtained through previous downlink or uplink channel estimates).
\STATE Build $\hat{\mathbf{E}} = (\mathbf{e}(\hat{\phi}_1),\dots,\mathbf{e}(\hat{\phi}_L))$.
\STATE Send pilot sequences $\mathbf{X}_{\text{AC}}  =   \sqrt{\frac{P_t}{L}}\left(\mathbf{e}(\hat{\phi}_1),\dots,\mathbf{e}(\hat{\phi}_L) \right)$ of duration $L$ and receive user feedback to build observations following \eqref{eq:observation}.
\STATE Estimate path gains $\hat{\beta}_1,\dots,\hat{\beta}_L$.
\STATE Estimate channel $\mathbf{h}_{\text{AC}}(\hat{\boldsymbol{\theta}}_{\text{AC}}) = \sum_{l=1}^L\hat{\beta}_l \mathbf{e}(\hat{\phi}_l)$.
\ENSURE $\text{MSE}(\hat{\boldsymbol{\theta}}_{\text{AC}}) \geq \text{max}\left(\big\Vert \mathbf{h} - \hat{\mathbf{E}}(\hat{\mathbf{E}}^H\hat{\mathbf{E}})^{-1}\hat{\mathbf{E}}^H \mathbf{h} \big\Vert_2^2, \frac{\sigma^2L^2}{P_t}\right)$
\end{algorithmic}
\label{alg:angle_constrained}}
\end{algorithm}



\new{\subsection{A new channel estimation strategy for physical models}
\label{ssec:newstrat}
Let us now propose another strategy \newnew{for the channel tracking phase} that \newnew{is more robust to the angle estimation error because it} does not suffer from the same bias problem as the angle-constrained strategy. \newnew{Moreover, the proposed strategy} also leads to short pilot sequences \newnew{thanks to the use of theorem~\ref{thm:optim} and algorithm~\ref{alg:single} for pilot design,} and incurs only a small variance increase \newnew{compared to the angle-constrained strategy}.
We indeed argue that is is possible to make better use of previously acquired azimuth angles estimates $\hat{\phi}_1,\dots,\hat{\phi}_L$ than to bias the model by considering them perfectly estimated and to constrain the channel estimate to belong to the range of $\hat{\mathbf{E}}$. Indeed, one could instead use the azimuth estimates to design pilot sequences according to algorithm~\ref{alg:single}, so as to update the angle estimation while estimating the channel. This amounts to consider the physical model of \eqref{eq:physical_params_new} and \eqref{eq:physical_channel} as is, without fixing the angles. This way, the variation space is expressed as in \eqref{eq:variation_phy}.
\newline {\bf Optimal CRB.} Applying theorem~\ref{thm:optim}, the optimal CRB of this full physical model is thus bounded as 
\begin{equation}
\frac{9}{4}\times\frac{\sigma^2L^2}{P_t}\leq \text{CRB}_\text{min}(\boldsymbol{\theta}_{\text{PHY}}) \leq \frac{9}{2}\times \frac{\sigma^2L^2}{P_t}.
\label{eq:optim_PHY}
\end{equation}
\newnew{These bounds are obtained by injecting $N_p = 3L$ into \eqref{eq:bound_CRB}.
The exact value of the CRB depends} on the values of the scalars $c_1,\dots,c_{\lfloor \frac{N_p}{2}\rfloor}$. These bounds are $\frac{9}{4}$ times larger than the CRB bounds of the angle-constrained strategy \eqref{eq:optim_LP}, because of the larger number of real parameters to estimate ($3L$ instead of $2L$). It is attained for perfect estimates of the azimuth angles. A natural estimate of the variation space based on azimuth estimates reads
\begin{equation}
\hat{\mathcal{V}}_{\boldsymbol{\theta}_{\text{PHY}}}  = \text{span}_\mathbb{R} \left(\Big\{\mathbf{e}(\hat{\phi}_l),-\mathrm{j}\mathbf{e}(\hat{\phi}_l),\frac{\partial\mathbf{e}(\hat{\phi}_l)}{\partial \hat{\phi}_l}\Big\}_{l=1}^L\right).
\label{eq:variation_phy_est}
\end{equation}
This estimate can be used directly to design pilot sequences according to algorithm~\ref{alg:single}.
 The proposed \newnew{channel tracking} strategy is summarized in algorithm~\ref{alg:new_strat}. It fits into a multi-user framework and a complete transmission strategy exactly as its angle-constrained counterpart, to which it is pretty similar. The two main differences with algorithm~\ref{alg:angle_constrained} are the slightly longer pilot sequences ($\lceil \frac{3L}{2} \rceil$ instead of $L$) and the fact that azimuth angles are not kept fixed but updated. \newnew{The update of the angles estimates can be done for example using the maximum likelihood principle with exhaustive testing of angles sampling uniformly a small region around the previous angles estimates, or with a few steps of gradient descent initialized at the previous angles estimates.} This update has the effect of suppressing the bias and leads to a lower bound on the MSE comprising only variance, which vanishes at high SNR. \newnew{The two evoked strategies (angle-constrained and proposed) actually differ in the way prior information is used. For the angle-constrained strategy, channel estimates themselves are based on fixed and previously acquired angle estimates, while for the proposed strategy, only the downlink pilot sequences are based on previous angle estimates, which is more robust as shown empirically in the following paragraph, because pilot sequences stay quasi-optimal if the true angles stay in a neighborhood of the estimates. Moreover, the proposed strategy allows an update of the angle estimates, which may reduce the frequency at which the path estimation phase (which is computationally heavy) has to be carried out.}
\begin{algorithm}[htb]
\new{\caption{\new{High level summary of \newnew{channel tracking phase of the} proposed estimation strategy ($L$ even)}}
\begin{algorithmic}[1] 
\REQUIRE Estimates $\hat{\phi}_1,\dots,\hat{\phi}_L$ of the azimuth angles (obtained through previous downlink or uplink channel estimates).
\STATE Build $\hat{\mathcal{V}}_{\boldsymbol{\theta}_{\text{PHY}}}  = \text{span}_\mathbb{R} \left(\left\{\mathbf{e}(\hat{\phi}_l),-\mathrm{j}\mathbf{e}(\hat{\phi}_l),\frac{\partial\mathbf{e}(\hat{\phi}_l)}{\partial \hat{\phi}_l}\right\}_{l=1}^L\right)$.
\STATE Send pilot sequences $\mathbf{X}_{\text{PHY}}$ of duration $\lceil\frac{3L}{2}\rceil$ built using algorithm~\ref{alg:single} and receive user feedback to build observations following \eqref{eq:observation}.
\STATE Estimate path gains $\hat{\beta}_1,\dots,\hat{\beta}_L$ and update $\hat{\phi}_1,\dots,\hat{\phi}_L$.
\STATE Estimate channel $\mathbf{h}_{\text{PHY}}(\hat{\boldsymbol{\theta}}_{\text{PHY}}) = \sum_{l=1}^L\hat{\beta}_l \mathbf{e}(\hat{\phi}_l)$.
\ENSURE $\text{MSE}(\hat{\boldsymbol{\theta}}_{\text{PHY}}) \geq  \frac{9}{4}\times\frac{\sigma^2L^2}{P_t}$
\end{algorithmic}
\label{alg:new_strat}}
\end{algorithm}}

\noindent{\bf Comparison of strategies with azimuth error.} Let us now compare numerically the proposed strategy with the angle-constrained estimation strategy, taking into account the azimuth estimation error, \newnew{in order to assess the robustness of the strategies with respect to the angle estimation error}. To do so, we consider $N_t=64$ antennas at the base station (half-wavelength separated ULA).
First, to illustrate the fundamental difference on a simple example, we consider a single path channel $\mathbf{h} = \beta\mathbf{e}(\phi)$ with estimated azimuth $\hat{\phi}=0$ and true azimuth $\phi = \hat{\phi} + \Delta$, $\Delta$ being the azimuth estimation error. Then, a lower bound on the relative MSE (MSE divided by the squared norm of the channel) is computed for both strategies. Regarding the new proposed strategy, the lower bound is simply the relative CRB, which is the bound of theorem~\ref{thm:CRB} divided by the squared norm of the channel:
\begin{equation}
\frac{\sigma^2}{2\left\Vert \mathbf{h} \right\Vert_2^2}\text{Tr}\left[\mathfrak{Re}\Big\{\mathbf{U}^H \mathbf{M}\mathbf{M}^H \mathbf{U}\Big\}^{-1} \right].
\label{eq:relCRB}
\end{equation}
with
$$
\mathbf{U} = \left( \mathbf{e}(\phi),\overline{\frac{\partial\mathbf{e}(\phi)}{\partial \phi}}\right), \mathbf{M} = \sqrt{\frac{P_t}{\sqrt{2}+1}}\left(2^{\frac{1}{4}}\mathbf{e}(\hat{\phi}),\overline{\frac{\partial\mathbf{e}(\hat{\phi})}{\partial \hat{\phi}}} \right),
$$
where $\overline{\frac{\partial\mathbf{e}(\phi)}{\partial \phi}}$ is simply the normalized version of $\frac{\partial\mathbf{e}(\phi)}{\partial \phi}$ (this observation matrix is the result of applying algorithm~\ref{alg:single}).
In this case, the physical model comprises $N_p = 3$ parameters, leading to sequences of length $\lceil \frac{3L}{2}\rceil=2$.
Regarding the angle-constrained estimation strategy, the lower bound on the relative MSE is computed as the maximum of \eqref{eq:relCRB} and the relative bias
\begin{equation}
\frac{\big\Vert \mathbf{h} - \hat{\mathbf{E}}(\hat{\mathbf{E}}^H\hat{\mathbf{E}})^{-1}\hat{\mathbf{E}}^H \mathbf{h} \big\Vert_2^2}{\left\Vert \mathbf{h} \right\Vert_2^2},
\label{eq:relbias}
\end{equation}
with
$$
\mathbf{U} = \mathbf{e}(\phi), \hat{\mathbf{E}} = \mathbf{e}(\hat{\phi}), \mathbf{M} = \sqrt{P_t}\mathbf{e}(\hat{\phi})
$$
In this case, the simplified physical model comprises $N_p = 2$ parameters, leading to sequences of length $L=1$.
These error lower bounds are shown for \newnew{$\Delta \in \{0^\circ, 1.0^\circ,5.0^\circ\}$ ($0^\circ$ meaning perfect angle estimates, leading to the optimal CRB)} on figure~\ref{fig:onepath} as a function of the potential signal to noise ratio (pSNR)
$$
\text{pSNR} \triangleq \frac{P_t\left\Vert \mathbf{h} \right\Vert_2^2}{\sigma^2},
$$
which is an upper bound on the classical SNR, attained only if the precoder is perfectly collinear to the channel. From the figure, \newnew{one can notice that the optimal CRB (with perfect angle estimate, $\Delta=0$) is lower for the angle-constrained strategy than for the proposed strategy, as stated in the previous paragraph. However, it is interesting to notice that, as soon as there is some angle estimation error ($\Delta>0$)}, the proposed strategy is always theoretically better than the angle constrained strategy at high pSNR. This is because it is not biased toward the previously estimated azimuth angles, as is the angle-constrained strategy, since the bias is independent of the pSNR. Moreover, at low pSNR, the proposed strategy and the angle-constrained one are \newnew{within a few decibels, because the number of parameter to estimate is pretty close in both cases (proportional to the number of dominant paths). For a relatively small angle estimation error ($\Delta=1$), the proposed strategy exhibits a robust behavior, being very close to the optimal CRB ($\Delta=0$) for all pSNR values, which is not the case of the angle-constrained strategy. Finally, note that for a pretty large angle estimation error ($\Delta=5$) the proposed strategy leads to better results than the angle-constrained one, and that even for a low pSNR.} In summary, the azimuth estimation error seems to have much less effect on the performance of the proposed strategy than on that of the angle-constrained strategy. Once again, this is explained by the fact that updating the azimuths estimates allows to reduce the impact of the initial estimation error, \newnew{yielding a more robust strategy. In the two phases communication workflow we consider here, the robustness is interesting since it allows to perform the channel tracking phase longer without degrading the link. In turn, this allows to avoid as long as possible the costly path estimation phase, in order to spare resources for communication purposes.}

\begin{figure}[htbp]
\includegraphics[width=\columnwidth]{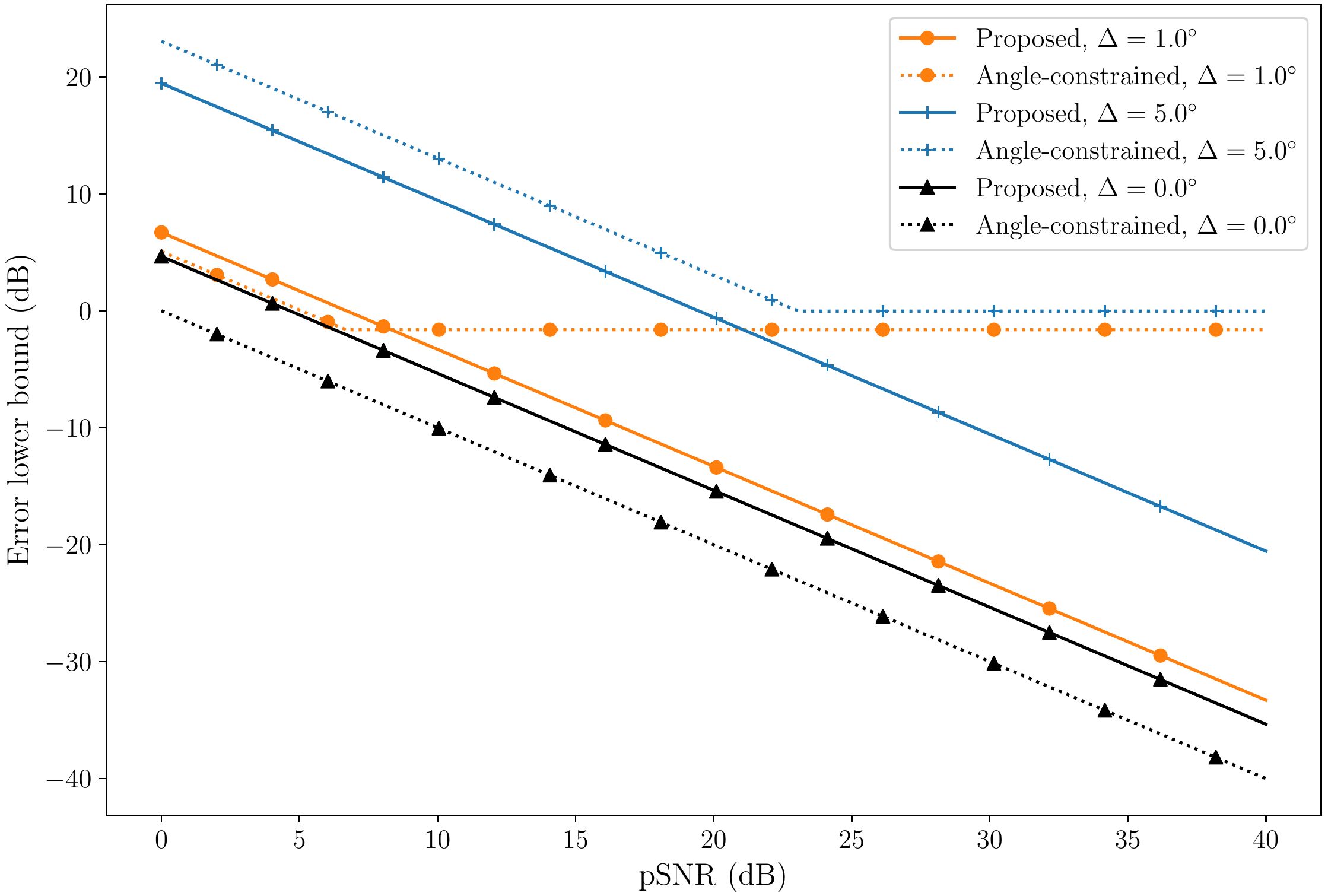}
\new{\caption{Comparison of estimation strategies for single path channels ($L=1$).}\label{fig:onepath}}
\end{figure}

\new{
In order to further validate the approach, let us now study a more practical scenario, with multipath channels. To do so, we consider a clustered channel model at a frequency of $28$~GHz, with $L$ being equal to the number of clusters. The number of clusters and their powers are drawn according to the NYUSIM channel model \cite{Samimi2016} in the NLOS scenario, which yields $L \in [1,7]$. Azimuth angles corresponding to the main azimuth of each cluster $\phi_1,\dots,\phi_L$ are uniformly distributed between $0$ and $2\pi$. In order to simulate the azimuth estimation error, the estimated azimuths are generated as $\hat{\phi}_l = \phi_l + \delta_l $, $\delta_l$ being uniformly distributed between $-\Delta$ and $\Delta$. Pilot sequences are built using algorithm~\ref{alg:single}. Results for \newnew{$\Delta \in \{0^\circ, 1.0^\circ, 5.0^\circ\}$} are shown on figure~\ref{fig:multipath}. The curves exhibit qualitatively the same behavior as for the single path case studied before, definitely showing that the \newnew{theoretical study performed in this paper allows to design efficient downlink pilot sequences yielding more robustness to the channel tracking phase of methods based on previous angle estimates \cite{Gao2015,Xie2017}.}
}
\begin{figure}[htbp]
\includegraphics[width=\columnwidth]{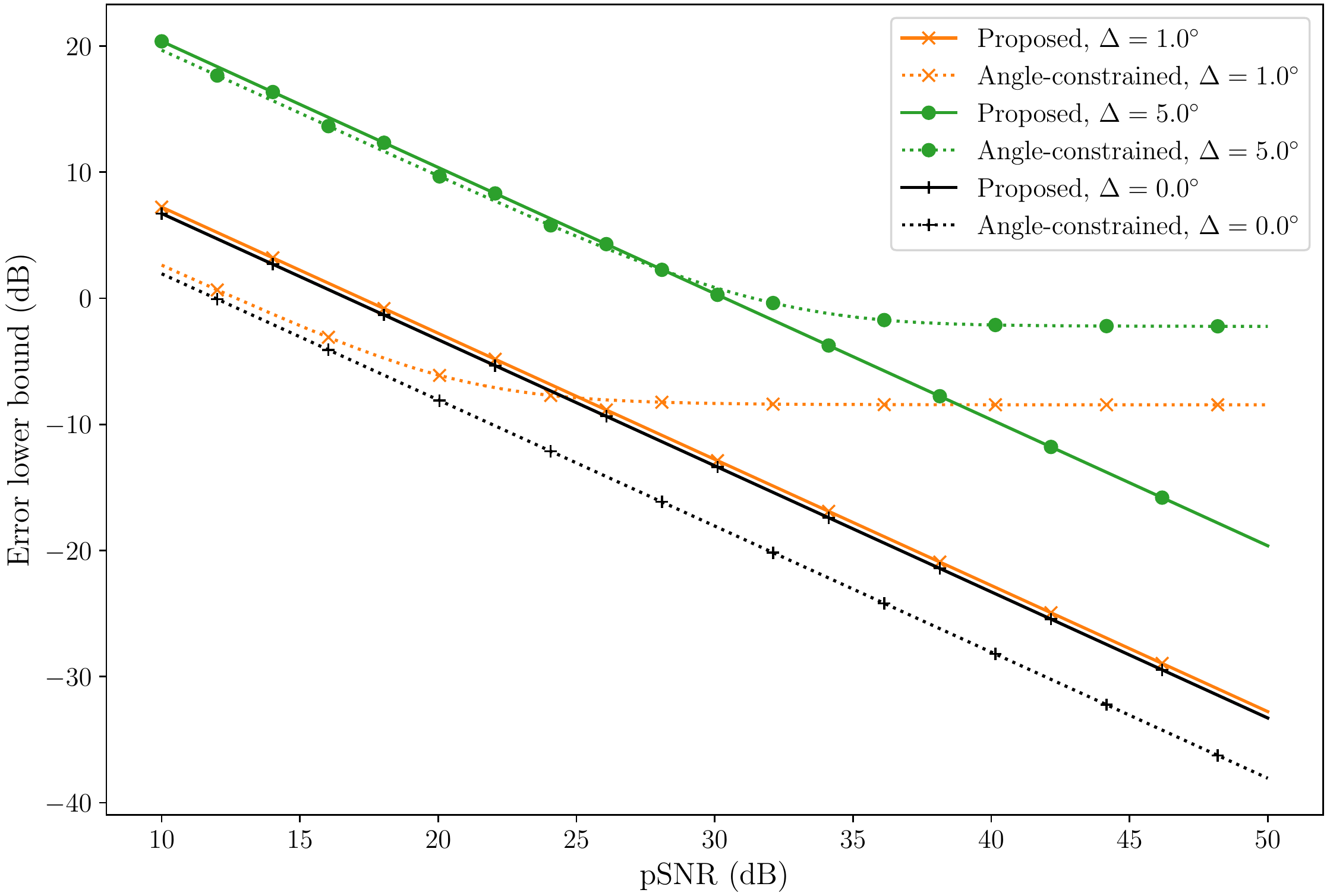}
\new{\caption{Comparison of estimation strategies for multipath channels generated according to NYUSIM  channel model \cite{Samimi2016}. Averages over $1000$ channel realizations are shown.}
\label{fig:multipath}}
\end{figure}

\section{Conclusion}
\label{sec:conclusion}
In this paper, we studied the problem of estimating a channel of interest parameterized according to a nonlinear model, based on noisy complex linear measurements, obeying \eqref{eq:observation}.

The Cram\'er-Rao bound of such a general problem is established, showing its key dependency on an $\mathbb{R}$-vector space we called \emph{variation space} (theorem~\ref{thm:CRB}). The CRB is shown to be proportional to the trace of the inverse compression of the observation matrix to the variation space (corollary~\ref{cor:intrinsic_CRB}).

\moved{The identifiability conditions on the observation matrix are given (theorem~\ref{thm:identif}), as well as a minimal number of measurements for identifiability to be possible (corollary~\ref{cor:minobs}).}

A general result about $\mathbb{R}$-vector spaces is provided (lemma~\ref{lem:decomp_V_theta}), which allows to decompose the variation space into $\mathbb{C}$-orthogonal subspaces. Such a decomposition proves useful in the study of optimal observation matrices which is carried out next.

The minimal CRB and associated observation matrices of minimal length are determined (theorem~\ref{thm:optim}). They are shown to depend only on the observation power, the noise level and intrinsic properties of the variation space.

The results obtained for the general estimation problem are then particularized to MIMO channel estimation. It is shown that the general framework allows to retrieve well-known results, but also to derive optimal pilot sequences of minimal length in a setting for which they had not been determined yet. \newnew{In particular, it  allows to design efficient downlink pilot sequences yielding more robustness to the channel tracking phase of methods based on previous angle estimates \cite{Gao2015,Xie2017}.} 

In the future, the theoretical results provided here could be applied to more practical MIMO systems, for example including hybrid precoding and combining \cite{Heath2016,Elayach2014,Sayeed2016}. They could allow to determine optimal pilot sequences in this context, as well as to quantify the suboptimality of existing or simpler schemes.
They could also very well be applied outside the MIMO channel estimation scope, for any estimation problem whose observation model fits  \eqref{eq:observation}. 
Note that the strength of this study lies in its generality, since it encompasses all deterministic models, linear or not, depending on real or complex parameters, and is valid for any unbiased estimator. This renders the obtained results potentially useful well beyond the scope of channel estimation. 

\appendices

\moved{
\section{Proof of lemma~\ref{lem:decomp_V_theta}}
\label{proof:lemma1}
(i) The general strategy of the proof is to exhibit an $\mathbb{R}$-orthonormal basis of $\mathcal{E}$ in which vectors can be grouped by two so that vectors of different groups are $\mathbb{C}$-orthogonal. The first step of the proof amounts to link the real and complex inner products as
\begin{equation}
\langle \mathbf{a},\mathbf{b} \rangle_{\mathbb{C}}= \mathfrak{Re}\{\mathbf{a}^H\mathbf{b}\} + \mathrm{j}\mathfrak{Im}\{\mathbf{a}^H\mathbf{b}\} = \langle \mathbf{a},\mathbf{b} \rangle_{\mathbb{R}} + \mathrm{j}\langle \mathrm{j}\mathbf{a},\mathbf{b} \rangle_{\mathbb{R}}.
\label{eq:decomp_scalar}
\end{equation}
Now, the idea is to maximize the second term of this sum (which will automatically cancel the first one) in order to build recursively an $\mathbb{R}$-orthonormal basis of $\mathcal{E}$ with the sought properties. To do so, let us choose
\begin{equation}
(\mathbf{v}_{1},\mathbf{w}_{1}) \in \underset{\underset{\left\Vert \mathbf{v} \right\Vert_2=\left\Vert \mathbf{w} \right\Vert_2=1}{(\mathbf{v},\mathbf{w})\in \mathcal{E}^2}}{\text{argmax}} \langle \mathbf{v},\mathrm{j}\mathbf{w} \rangle_{\mathbb{R}},
\label{eq:max_scalar}
\end{equation}
which necessarily exist since the function $\langle \mathbf{v},\mathrm{j}\mathbf{w} \rangle_{\mathbb{R}}$ to maximize is continuous and the constraint set is compact. Moreover, let
\begin{equation}
c_1\triangleq \underset{\underset{\left\Vert \mathbf{v} \right\Vert_2=\left\Vert \mathbf{w} \right\Vert_2=1}{(\mathbf{v},\mathbf{w})\in \mathcal{E}^2}}{\text{max}} \langle \mathbf{v},\mathrm{j}\mathbf{w} \rangle_{\mathbb{R}} = \langle \mathbf{v}_{1},\mathrm{j}\mathbf{w}_{1} \rangle_{\mathbb{R}}.
\label{eq:c_max}
\end{equation}
Note that by the Cauchy-Schwarz inequality, $c_1 \leq 1$. Moreover, $c_1 \geq 0$ because if $\langle \mathbf{v},\mathrm{j}\mathbf{w} \rangle_{\mathbb{R}} \leq 0$, then $\langle -\mathbf{v},\mathrm{j}\mathbf{w} \rangle_{\mathbb{R}} \geq \langle \mathbf{v},\mathrm{j}\mathbf{w} \rangle_{\mathbb{R}}$. The case $c_1 = 0$ is easily handled since in that case, any $\mathbb{R}$-orthogonal basis is automatically also $\mathbb{C}$-orthogonal and (i) is proven. For the case $0<c_1 \leq 1$, let us write the Lagrangian of the constrained maximization problem \eqref{eq:max_scalar}:
\begin{equation*}
\mathcal{L}(\mathbf{v},\mathbf{w},\alpha,\beta) \triangleq 
\langle \mathbf{v},\mathrm{j}\mathbf{w} \rangle_{\mathbb{R}} + \alpha(\langle \mathbf{v},\mathbf{v} \rangle_{\mathbb{R}}-1) + \beta(\langle \mathbf{w},\mathbf{w} \rangle_{\mathbb{R}}-1),
\end{equation*}
where $\alpha \in \mathbb{R}$ and $\beta \in \mathbb{R}$ are the Lagrange multipliers (voluntarily ignoring the constraint $(\mathbf{v},\mathbf{w})\in \mathcal{E}^2$ for now). 
Differentiating it with respect to $\mathbf{v}$ and $\mathbf{w}$ and writing the optimality conditions (introducing the constraint $(\mathbf{v},\mathbf{w})\in \mathcal{E}^2$) yields $\forall \mathbf{z} \in \mathcal{E}$,
\begin{equation}
\langle \mathrm{j}\mathbf{w}_{1}+ 2\alpha \mathbf{v}_{1} , \mathbf{z}\rangle_{\mathbb{R}} = 0
\label{eq:first_cond}
\end{equation}
and
\begin{equation}
\langle \mathrm{j}\mathbf{v}_{1}+ 2\beta \mathbf{w}_{1} , \mathbf{z}\rangle_{\mathbb{R}} = 0
\label{eq:sec_cond}
\end{equation}
From there, injecting $\mathbf{z} = \mathbf{v}_1$ in \eqref{eq:first_cond} (resp.\ $\mathbf{z} = \mathbf{w}_1$ in \eqref{eq:sec_cond}) yields $-2\alpha = c_1$ (resp.\ $2\beta = c_1$). Moreover, injecting $\mathbf{z} = \mathbf{w}_1$ in \eqref{eq:first_cond} yields $\mathbf{v}_{1} \perp_{\mathbb{R}} \mathbf{w}_{1}$, so that $\text{span}_{\mathbb{R}}(\{\mathbf{v}_{1},\mathbf{w}_{1}\})$ is of dimension two. 
Moreover, if $\mathbf{z} \in \mathcal{E}$ is $\mathbb{R}$-orthogonal to both $\mathbf{v}_{1}$ and $\mathbf{w}_{1}$, then \eqref{eq:first_cond} implies that $\mathbf{z} \perp_{\mathbb{C}} \mathbf{w}_{1}$ and \eqref{eq:sec_cond} implies that $\mathbf{z} \perp_{\mathbb{C}} \mathbf{v}_{1}$. 
This means that $\mathcal{E}$ can be decomposed into the direct sum of a subspace of dimension $2$ ($\text{span}_{\mathbb{R}}(\{\mathbf{v}_{1},\mathbf{w}_{1}\})$) and a subspace of dimension $d-2$ (containing all the $\mathbf{z} \in \mathcal{E}$ that are $\mathbb{R}$-orthogonal to both $\mathbf{v}_{1}$ and $\mathbf{w}_{1}$) that are $\mathbb{C}$-orthogonal. The exact same reasoning can then be re-applied to the subspace of dimension $d-2$ to prove the lemma by descent, introducing the vectors $\mathbf{v}_2$ and $\mathbf{w}_2$ as the solution of \eqref{eq:max_scalar} on this subspace and the quantity $c_2$ as the inner product $\langle \mathbf{v}_2, \mathrm{j}\mathbf{w}_2 \rangle$. The descent stops when the dimension of the remaining subspace is strictly smaller than two, so that if $d$ is odd, the last subspace of the decomposition is of dimension one.}

\moved{(ii) Now, let us prove that the subspace of dimension two identified at each step necessarily belongs to an eigenspace of the operator  $\mathbf{P}_{\mathcal{E}}\circ\mathbf{P}_{\mathrm{j}\mathcal{E}}$. First of all, by the Hilbert projection theorem, for any $\mathbf{x}\in \mathcal{F}$ we can define $\mathbf{P}_{\mathcal{E}}\mathbf{x} \triangleq \text{argmin}_{\mathbf{s}\in \mathcal{E}} \left\Vert \mathbf{x} - \mathbf{s} \right\Vert_2$ and $\mathbf{P}_{\mathrm{j}\mathcal{E}}\mathbf{x} \triangleq \text{argmin}_{\mathbf{s}\in \mathrm{j}\mathcal{E}} \left\Vert \mathbf{x} - \mathbf{s} \right\Vert_2$, which are the orthogonal projections of $\mathbf{x}$ onto $\mathcal{E}$ and $\mathrm{j}\mathcal{E}$. One can notice that the two projections are linked since $\mathbf{P}_{\mathcal{E}}(\mathrm{j}\mathbf{x}) = \mathrm{j}\mathbf{P}_{\mathrm{j}\mathcal{E}}\mathbf{x}$. Then, combining the definition of the projection operators with \eqref{eq:max_scalar} and \eqref{eq:c_max} yields
$$
c_{1}\mathbf{v}_{1} =  \mathbf{P}_{\mathcal{E}} (\mathrm{j}\mathbf{w}_{1}) = \mathrm{j} \mathbf{P}_{\mathrm{j}\mathcal{E}} \mathbf{w}_{1}
$$
and
$$
c_{1}\mathbf{w}_{1} =  \mathbf{P}_{\mathcal{E}} (\mathrm{j}\mathbf{v}_{1}) = \mathrm{j} \mathbf{P}_{\mathrm{j}\mathcal{E}} \mathbf{v}_{1}.
$$
Combining these two equations, we get
$$
\mathbf{P}_{\mathcal{E}}\circ\mathbf{P}_{\mathrm{j}\mathcal{E}} (\mathbf{v}_{1}) = -c_{1}^2\mathbf{v}_{1}
$$
and
$$
\mathbf{P}_{\mathcal{E}}\circ\mathbf{P}_{\mathrm{j}\mathcal{E}} (\mathbf{w}_{1}) = -c_{1}^2\mathbf{w}_{1},
$$
which proves our claim for the first step of the descent. The exact same reasoning can be applied at each subsequent step of the descent.}

\moved{
It is interesting to notice that another (more algebraic) proof of this lemma is possible, which gives a practical way to obtain the basis vectors corresponding to the decomposition. Indeed, let $\mathbf{U}$ be any matrix whose columns form an $\mathbb{R}$-orthonormal basis of $\mathcal{E}$. Then,
$$
\mathbf{U}^H\mathbf{U} = \mathbf{Id} + \mathrm{j}\mathbf{A},
$$
where the matrix $\mathbf{A} = \mathfrak{Im}\{\mathbf{U}^H\mathbf{U}\}$ is skew-symmetric, so that it admits the following real normal form \cite[Theorem 8.16]{Greub1975} also known as the Youla decomposition \cite{Youla1961} :
\begin{equation}
\mathbf{B}^T\mathbf{A}\mathbf{B} = 
\begin{pmatrix}
0&-c_1 & \\
c_1 & 0 & \\
& & 0&-c_2 \\
& & c_2& 0 \\ 
& & & &\ddots \\
\end{pmatrix} \triangleq \boldsymbol{\Gamma},
\label{eq:youla_proof} 
\end{equation}
with $\mathbf{B} \in\mathbb{R}^{d\times d}$ a real orthogonal matrix ($\mathbf{B}^T\mathbf{B} = \mathbf{Id}$) whose columns are eigenvectors of the symmetric positive semi-definite matrix $\mathbf{A}^T\mathbf{A} = -\mathbf{A}^2$ (whose nonzero eigenvalues are all of multiplicity two and correspond to $c_1^2,c_2^2,\dots$), and where $0\leq c_k \leq 1$, $\forall k$. This yields
$$
\mathbf{B}^T\mathbf{U}^H\mathbf{U}\mathbf{B} = \mathbf{Id}  + \mathrm{j}\boldsymbol{\Gamma} =
\begin{pmatrix}
1&-\mathrm{j}c_1 & \\
\mathrm{j}c_1 & 1 & \\
& & 1&-\mathrm{j}c_2 \\
& & \mathrm{j}c_2& 1 \\ 
& & & &\ddots \\
\end{pmatrix},
$$
so that the columns of the matrix $\mathbf{U}\mathbf{B}$ form an $\mathbb{R}$-orthonormal basis of $\mathcal{E}$ in which vectors can be grouped by two so that vectors of different groups are $\mathbb{C}$-orthogonal. This is exactly the main claim of lemma~\ref{lem:decomp_V_theta}. In practice, the matrix $\mathbf{B}$ and the values $c_1,c_2,\dots$ can be obtained by computing the real Schur decomposition of the matrix $\mathfrak{Im}\{\mathbf{U}^H\mathbf{U}\}$ and reordering the blocks.}

\moved{We preferred giving a geometric proof here in order to give more insight on the interaction between $\mathbb{R}$-vector spaces and $\mathbb{C}$-vector spaces.
Indeed, our proof highlights the fact that the quantity $c_{i}$ can be nicely interpreted as the squared cosine of the $i$-th principal angle \cite{Bjorck1973} between $\mathcal{E}$ and $\mathrm{j}\mathcal{E}$. 
}

\moved{
\section{Proof of theorem~\ref{thm:optim}}
\label{proof:thm3}
 Let us first consider the case where $N_p$ is even. Starting from the result of theorem~\ref{thm:CRB} and using the fact that it holds true for any matrix whose columns form an $\mathbb{R}$-orthonormal basis of $\mathcal{V}_{\boldsymbol{\theta}}$, we express the CRB as
$$
\text{CRB}(\boldsymbol{\theta},\mathbf{M}) = \frac{\sigma^2}{2}\text{Tr}\left[\mathfrak{Re}\Big\{\mathbf{V}^H \mathbf{M}\mathbf{M}^H \mathbf{V}\Big\}^{-1} \right],
$$
where $\mathbf{V}$ is the matrix defined in~\eqref{eq:matrix_V} when applying lemma~\ref{lem:decomp_V_theta} to $\mathcal{V}_{\boldsymbol{\theta}}$.}

\moved{Next, using the fact that for a symmetric positive semidefinite matrix $\mathbf{A}$, $(\mathbf{A}^{-1})_{ii} \geq \frac{1}{a_{ii}}, \forall i$ \cite[Theorem 7.7.15]{Horn2012}, we get
$$
\text{Tr}\left[\mathfrak{Re}\Big\{\mathbf{V}^H \mathbf{M}\mathbf{M}^H \mathbf{V}\Big\}^{-1} \right] \geq \sum_{k=1}^{\frac{N_p}{2}}\frac{1}{\left\Vert \mathbf{M}^H\mathbf{v}_k \right\Vert_2^2}+\frac{1}{\left\Vert \mathbf{M}^H\mathbf{w}_k \right\Vert_2^2},
$$
with an equality if and only if the matrix $\mathfrak{Re}\Big\{\mathbf{V}^H \mathbf{M}\mathbf{M}^H \mathbf{V}\Big\}$ is diagonal.
In order to proceed, let us define
$$
\tilde{\mathbf{u}}_k^+ = \frac{1}{\sqrt{2(1+c_k)}}(\mathbf{v}_k + \mathrm{j}\mathbf{w}_k)
$$
and
$$
\tilde{\mathbf{u}}_k^- = \frac{1}{\sqrt{2(1-c_k)}}(\mathbf{v}_k - \mathrm{j}\mathbf{w}_k),
$$
which are unitary vectors such that $\tilde{\mathbf{u}}_k^+ \perp_{\mathbb{C}} \tilde{\mathbf{u}}_k^-$, $\forall k$. These vectors allow to express
\begin{align*}
\left\Vert \mathbf{M}^H\mathbf{v}_k \right\Vert_2^2 = \frac{1}{2}\Big[&(1+c_k)\left\Vert\mathbf{M}^H\tilde{\mathbf{u}}_k^+ \right\Vert_2^2 + (1-c_k)\left\Vert\mathbf{M}^H\tilde{\mathbf{u}}_k^-\right\Vert_2^2\\
& + \sqrt{1-c_k^2}\mathfrak{Re}\{(\tilde{\mathbf{u}}_k^+)^H\mathbf{MM}^H\tilde{\mathbf{u}}_k^-\} \Big],
\end{align*}
and
\begin{align*}
\left\Vert \mathbf{M}^H\mathbf{w}_k \right\Vert_2^2 = \frac{1}{2}\Big[&(1+c_k)\left\Vert\mathbf{M}^H\tilde{\mathbf{u}}_k^+ \right\Vert_2^2 + (1-c_k)\left\Vert\mathbf{M}^H\tilde{\mathbf{u}}_k^-\right\Vert_2^2\\
& - \sqrt{1-c_k^2}\mathfrak{Re}\{(\tilde{\mathbf{u}}_k^+)^H\mathbf{MM}^H\tilde{\mathbf{u}}_k^-\} \Big].
\end{align*}
Now, let us define $P_k^+ \triangleq \left\Vert\mathbf{M}^H\tilde{\mathbf{u}}_k^+ \right\Vert_2^2$, $P_k^- \triangleq \left\Vert\mathbf{M}^H\tilde{\mathbf{u}}_k^- \right\Vert_2^2$ and $d_k \triangleq \sqrt{1-c_k^2}\mathfrak{Re}\{(\tilde{\mathbf{u}}_k^+)^H\mathbf{MM}^H\tilde{\mathbf{u}}_k^-\}$, so that we have
\begin{align*}
&\sum\nolimits_{k=1}^{\frac{N_p}{2}}\frac{1}{\left\Vert \mathbf{M}^H\mathbf{v}_k \right\Vert_2^2}+\frac{1}{\left\Vert \mathbf{M}^H\mathbf{w}_k \right\Vert_2^2}\\
 = & \sum\nolimits_{k=1}^{\frac{N_p}{2}} \frac{2}{(1+c_k)P_k^+ + (1-c_k)P_k^- + d_k }\\
& \,\,\,\,+ \frac{2}{(1-c_k)P_k^- + (1+c_k)P_k^+ - d_k} \\
\geq& \sum\nolimits_{k=1}^{\frac{N_p}{2}} \frac{4}{(1-c_k)P_k^- + (1+c_k)P_k^+},
\end{align*}
the last inequality being a direct consequence of the fact that $\frac{1}{a+b}+\frac{1}{a-b} \geq \frac{2}{a}$ (because of the convexity of the inverse function on $\mathbb{R}_+$). It becomes an equality if and only if $d_k = \sqrt{1-c_k^2}\mathfrak{Re}\{(\tilde{\mathbf{u}}_k^+)^H\mathbf{MM}^H\tilde{\mathbf{u}}_k^-\}=0$, $\forall k$.}

\moved{In summary, we have 
$$
\text{CRB}(\boldsymbol{\theta},\mathbf{M}) = \frac{\sigma^2}{2}\sum\nolimits_{k=1}^{\frac{N_p}{2}} \frac{4}{(1-c_k)P_k^- + (1+c_k)P_k^+}
$$
if and only if $\mathfrak{Re}\Big\{\mathbf{V}^H \mathbf{M}\mathbf{M}^H \mathbf{V}\Big\}$ is diagonal and $\mathfrak{Re}\{(\tilde{\mathbf{u}}_k^+)^H\mathbf{MM}^H\tilde{\mathbf{u}}_k^-\}=0$, $\forall k$. Moreover, $\left\Vert \mathbf{M} \right\Vert_F^2 = \text{Tr}[\mathbf{MM}^H] \geq   \sum_{k=1}^{\frac{N_p}{2}} P_k^+ + P_k^-$, with an equality if and only if $\text{im}_{\mathbb{C}}(\mathbf{M}) \subset \text{span}_{\mathbb{C}}(\{\tilde{\mathbf{u}}_k^+,\tilde{\mathbf{u}}_k^-\}_{k=1}^{\frac{N_p}{2}})$.
The optimization problem~\eqref{eq:pb_optim} is thus lower-bounded by the simpler problem
\begin{align}
\begin{split}
\underset{P_k^+,P_k^-, k=1,\dots,\frac{N_p}{2}}{\text{minimize}} &\quad \sum\nolimits_{k=1}^{\frac{N_p}{2}} \frac{4}{(1-c_k)P_k^- + (1+c_k)P_k^+}, \\
\text{subject to} &\quad \sum\nolimits_{k=1}^{\frac{N_p}{2}} P_k^+ + P_k^- =P.
\end{split}
\label{eq:pb_optim_reform}
\end{align}
Let us solve this problem and then identify matrices $\mathbf{M}$ for which the optimal values of \eqref{eq:pb_optim_reform} and \eqref{eq:pb_optim} coincide. 
It is obvious that at the optimum of \eqref{eq:pb_optim_reform}, $P_k^- = 0$, $\forall k$, so that it is equivalent to solve the even simpler problem
\begin{align}
\begin{split}
\underset{P_k^+, k=1,\dots,\frac{N_p}{2}}{\text{minimize}} &\quad \sum\nolimits_{k=1}^{\frac{N_p}{2}} \frac{4}{ (1+c_k)P_k^+}, \\
\text{subject to} &\quad \sum\nolimits_{k=1}^{\frac{N_p}{2}} P_k^+=P.
\end{split}
\label{eq:pb_optim_reform_2}
\end{align}
Using the Lagrange multipliers method, it is straightforward to obtain the optimal powers
$$
(P_k^+)_\text{opt} = \frac{P}{\sqrt{1+c_k}\sum_{j=1}^{\frac{N_p}{2}}\frac{1}{\sqrt{1+c_j}}},
$$
and the optimal value of the optimization problems \eqref{eq:pb_optim_reform} and \eqref{eq:pb_optim_reform_2} is 
$$
\sum\nolimits_{k=1}^{\frac{N_p}{2}} \frac{4}{ (1+c_k)(P_k^+)_{\text{opt}}} = \frac{4}{P}\left( \sum\nolimits_{k=1}^{\frac{N_p}{2}} \frac{1}{\sqrt{1+c_k}} \right)^2.
$$
It is also the optimal value of problem~\eqref{eq:pb_optim}, since it is attained with the observation matrix
$$
\mathbf{M}_\text{opt} = \left(\sqrt{\left(P_1^+\right)_{\text{opt}}} \tilde{\mathbf{u}}_1^+,\dots, \sqrt{\left(P_{\frac{N_p}{2}}^+\right)_{\text{opt}}} \tilde{\mathbf{u}}_\frac{N_p}{2}^+ \right),
$$
which indeed guarantees that $P_k^+ = (P_k^+)_\text{opt}$ and $d_k=0$, $\forall k$, that $\mathfrak{Re}\Big\{\mathbf{V}^H \mathbf{M}_\text{opt}\mathbf{M}_\text{opt}^H \mathbf{V}\Big\}$ is diagonal, and $\left\Vert \mathbf{M}_{\text{opt}} \right\Vert_F^2 = P$.}

\moved{The proof is very similar in the case where $N_p$ is odd, the only difference being that the decomposition of $\mathcal{V}_{\boldsymbol{\theta}}$ is the one given in \eqref{eq:decomp_V_theta_odd} rather than the one given in \eqref{eq:decomp_V_theta}.}

%
%
%
%
%

\ifCLASSOPTIONcaptionsoff
  \newpage
\fi



\bibliographystyle{unsrt}
\bibliography{biblio_mimo}

%

\begin{IEEEbiography}[{\includegraphics[width=1in,height=1.25in,viewport=70 50 420 487.5,clip,keepaspectratio]{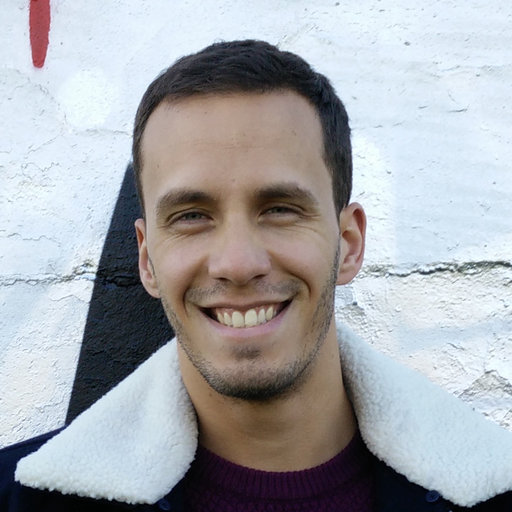}}]{Luc Le Magoarou} is a researcher at b\raisebox{0.2mm}{\scalebox{0.7}{\textbf{$<>$}}}com (Rennes, France). He received the Ph.D. in signal processing and the M.Sc. in electrical engineering, both from the National Institute of Applied Sciences (INSA Rennes, France), in 2016 and 2013 respectively. During his Ph.D., he was with Inria (Rennes, France), in the PANAMA research group. His main research interests are signal processing and machine learning, currently applied to MIMO communication systems. 
\end{IEEEbiography}
\begin{IEEEbiography}[{\includegraphics[width=1in,height=1.25in,viewport=50 50 540 662.5,clip,keepaspectratio]{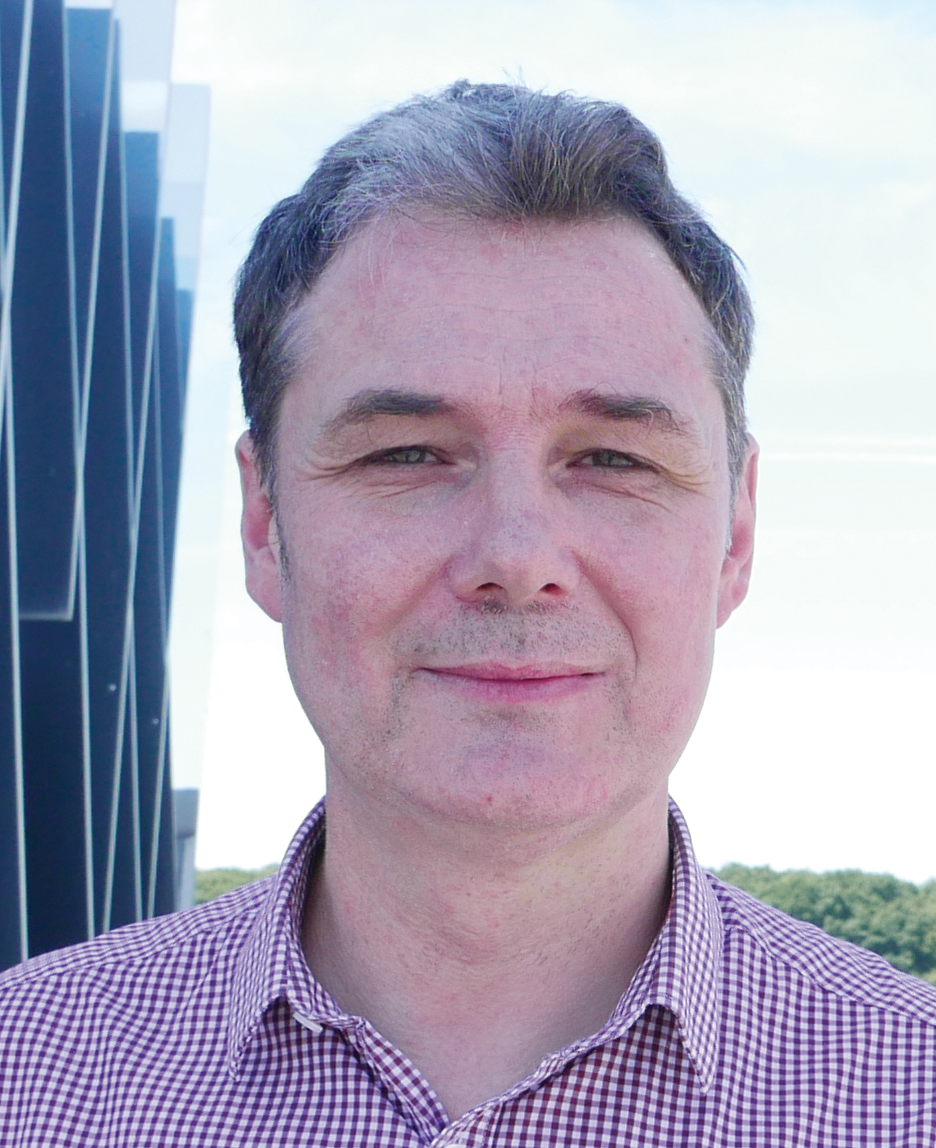}}]{St\'ephane Paquelet} received the B.Sc. degree
from the Ecole Polytechnique, Paris, France, in 1996, and the M.Sc. degree
from Telecom Paris, Paris, France, in 1998. He successively worked in
the fields of cryptology and signal processing for electronic warfare with
Thales and led UWB R\&D with Mitsubishi Electric from 2002 to 2007,
where he proposed two pioneering transceivers for short-range/high data
rates and large-range/low data rates, including telemetry. He developed multi-standard reconfigurable RF-IC at Renesas Design France until 2014. From 2014 to 2019, he led wireless communications activities for IRT b\raisebox{0.2mm}{\scalebox{0.7}{\textbf{$<>$}}}com, where he now leads the artificial intelligence laboratory.
\end{IEEEbiography}




\end{document}